\newtheorem{theorem}{Theorem}
\newtheorem{corollary}[theorem]{Corollary}
\newtheorem{fact}[theorem]{Fact}
\newtheorem{lemma}[theorem]{Lemma}
\newcommand{\comment}[1]{}
\begin{document}

\title{On reversible cascades in scale-free and
Erd\H{o}s-R\'enyi random
graphs}

\author{
Ching-Lueh Chang
\footnote{Department of Computer Science and Engineering, Yuan Ze University,
Taoyuan, Taiwan. Email:
clchang@saturn.yzu.edu.tw}
}

\maketitle

\begin{abstract}
%Let $G(V,E)$ be a simple directed or
%undirected
%graph
%The synchronous reversible
Consider the
%The
following
cascading
process
%reversible cascade
on a simple undirected graph $G(V,E)$
with diameter $\Delta$.
In round zero,
%Initially, in round zero,
a set $S\subseteq V$ of vertices, called the seeds,
are active.
% whereas.
In
%each
round $i+1,$
$i\in\mathbb{N},$
a non-isolated vertex
is activated
% or deactivated, respectively,
if at least
%or less than
a $\rho\in(\,0,1\,]$
fraction of its neighbors
are active in round $i$;
it is deactivated otherwise.
%where $i\in\mathbb{N}$.
For $k\in\mathbb{N},$ let
%Let
$\text{min-seed}^{(k)}(G,\rho)$
be the minimum number of seeds
needed to activate all vertices
in
or before
round $k$.
% where $k\in\mathbb{N}$.
%We show that
%, for a connected scale-free graph
This paper
derives
%presents
%two results.
upper bounds on $\text{min-seed}^{(k)}(G,\rho)$.
%First,
In particular,
if
$G$ is connected and
there exist constants $C>0$ and $\gamma>2$
such that the fraction of degree-$k$ vertices in $G$
is at most $C/k^\gamma$ for all $k\in\mathbb{Z}^+,$
%$G$ is a connected scale-free graph,
then
%the minimum number of seeds needed to eventually
%activate all vertices is
$\text{min-seed}^{(\Delta)}(G,\rho)
=O(\lceil\rho^{\gamma-1}\,|\,V\,|\rceil)$.
% seeds suffice to
%activate
Furthermore,
for
$n\in\mathbb{Z}^+,$
$p=\Omega((\ln{(e/\rho)})/(\rho n))$
and with probability $1-\exp{(-n^{\Omega(1)})}$
over the Erd\H{o}s-R\'enyi random graphs $G(n,p),$
$\text{min-seed}^{(1)}(G(n,p),\rho)
=O(\rho n)$.
\comment{% will be said anyway
The bounds
carry over to all
%remain true
%for whatever
synchronizations of
the cascading process
%activations and
%deactivations.
that do not
simply
refuse
%resist
%to change
%postpone
all
valid
changes of states
forever.
}% will be said anyway
\end{abstract}

\section{Introduction}\label{introductionsection}

Let $G(V,E)$ be a simple directed graph, $\rho\in(\,0,1\,]$
and $S\subseteq V,$
where each vertex
of $G$
can be in one of two states,
active or inactive.
%A simple undirected graph is a directed one with each edge
%accompanied by the edge in the opposite direction.
%Consider the following
%cascading
%process,
The
%called the
synchronous reversible cascade
%played
%The process of
%cascading
%activation
proceeds
in rounds.
% under the reversible model
%Consider the following process of activation.
%Initially,
In round
% $0,$
zero,
only
the vertices in $S,$ called the seeds, are active.
%All other vertices are inactive.
In round $i+1,$
a vertex
%$v$
with a positive indegree is activated (resp., deactivated)
if at least (resp., less than) a $\rho$ fraction of its in-neighbors are active
in round $i,$ where $i\in \mathbb{N}$.
%Conversely, $v$
%Vertices with indegree zero, instead, are neither activated nor deactivated.
For $k\in\mathbb{N},$
define $\text{min-seed}^{(k)}(G,\rho)$ to be the
%The
minimum number of seeds needed
so that
%to activate
all vertices
will be active
%in or before round
in
or before
round $k$.
%Clearly, once all vertices are active, they will remain active forever.
%Therefore, $\text{min-seed}^{(k)}(\cdot,\cdot)$ monotonically decreases
%as $k$ increases.

%Local interaction games with full rationality
%are the same as
The synchronous reversible cascade above
is the same as
the
local interaction game with full rationality
except that
the latter is defined on infinite graphs with finite degrees~\cite{Mor00,
Kle07}.
%Morris~\cite{Mor00} considers
%In
%local interaction games
%with full rationality,
%there is an infinite graph with finite degrees.
%Furthermore,
%a vertex
%of an infinite graph with finite degrees
%is activated or deactivated in a round
%,
%respectively
%if at least or less than a certain fraction of its neighbors
%are active in the previous round, respectively.
%Our threshold-based activation and deactivation of a vertex
%are the
%Under a
%In a natural game-theoretic model of
%cascading behavior,
% of individuals
%in social networks
%a rational individual chooses one strategy
%Activating (resp., deactivating) a vertex with a
%The
%above
%threshold-based
%mechanism of
%activation and deactivation
%activating or deactivating a vertex
%in this paper
%is precisely
%coincides with
%the
% conse
%result of
%rational behavior in local interaction games~\cite{Mor00, Kle07}.
%,
%although the latter is defined on infinite graphs~\cite{Mor00, Kle07}.
%The difference between
%Hence the process
% of activation
%In
%For
%the direct modification of
%our
%the
%cascading process
%synchronous reversible cascade
%on
%to
%infinite graphs with finite degrees,
For local interaction games,
Morris~\cite{Mor00}
%characterizes
studies
conditions allowing
%when
%a finite
%set
%number
finitely many
%of
seeds
%can
to
activate
each vertex
sooner or later.
Blume~\cite{Blu93}, Ellison~\cite{Ell93}, Young~\cite{You98, You06}
and Montanari and Saberi~\cite{MS09}
%synchronize the activations and deactivations by a Poisson clock
%analyze
study
%the
a
% cascading process synchronized by a Poisson clock
variant where
there is
another variable
%specifies
% $\beta$
specifying
the vertices' degree of rationality
and
%a Poisson clock sunchronizing the process.
%activations and deactivations.
the states are updated according to a Poisson clock.
In their
%Their
fully rational
%case,
scenario,
%updates
updating
the state of
a vertex $v$
%is updated
means
%amounts to
%by
activating or deactivating $v,$
respectively,
if at least or less than a certain fraction of $v$'s neighbors
are active.
%, respectively.
They analyze the expected waiting time
until
% the states of
all or most vertices
%are the same.
enter the same state.

Consider the special cases of
%With $\rho=1/2$ in the cascading process,
%The special case of
%In the
%special case of
the
% cascading process
synchronous reversible cascade
%, namely $\rho=1/2,$
%with $\rho=1/2,$
with $\rho=1/2$
% (resp., $\rho=1/2+1/(2\,|\,V\,|)$),
or $\rho=1/2+1/(2\,|\,V\,|)$.
So
%activates
%or deactivates
a vertex
is activated
in a round
if
% at least or less than half
the
simple
%(resp., strict)
or strict
majority of its in-neighbors
are active in the previous round; it is deactivated otherwise.
%Similarly, strict majority
%results from, e.g., $\rho=1/2+1/(2\,|\,V\,|)$.
%Such
%These
Both
special cases,
among other similar
% mechanisms for activation and deactivation,
dynamics,
are suitable for modeling transient faults
in majority-based
fault-tolerant
%systems~\cite{Pel02, FKRRS03, FLLPS04}.
systems~\cite{Pel02, FKRRS03, FLLPS04}.
%Other
%similar
%mechanisms for activation and deactivation
%are also studied~\cite{Pel98, Pel02}.
%A set of seeds activating all vertices
%after
%a finite number of
%finitely many
%rounds
Call a set of seeds
a $k$-round monopoly if it activates all vertices in
or before
round $k,$
where $k\in\mathbb{N}$.
%A $k$-round monopoly refers to
%a set of seeds activating all vertices
%in or before round $k,$ where $k\in\mathbb{N}$.
% after finitely many rounds.
%A dynamic monopoly
%may refer to a $k$-round monopoly for any $k\in\mathbb{N}$.
%Moreover,
%any
Any
finite-round monopoly
is called a dynamic monopoly.
%A
%($k$-round or dynamic)
%monopoly
%is monotone if it does not lead to the deactivation of any
%active vertex in any round.
If a
%monopoly
set of seeds
does not lead to the deactivation of
% any
%active vertex
active vertices
in any round, then it is said to be monotone.
%such that no active vertices are deactivated in any round.
%It is said to be monotone if no active vertices are deactivated in
%any round.
Peleg~\cite{Pel98} shows an $\Omega(\sqrt{|\,V\,|}\,)$ lower bound
on the minimum size of $2$-round monopolies in any simple undirected
graph $G(V,E)$.
Bounds
%Optimal or nearly optimal bounds
are known on the minimum size of
monotone dynamic monopolies
in planar graphs~\cite{FKRRS03}, toroidal meshes, torus cordales,
torus serpentini~\cite{FLLPS04}
and simple undirected graphs~\cite{Pel98}.
Berger~\cite{Ber01}
shows the existence of constant-size dynamic monopolies
in an infinite family of simple undirected graphs.
%Other
Optimal bounds on the
%The
minimum size of $1$-round monopolies
are also derived
for planar graphs, hypercubes, graphs with a given
girth, graphs with diameter $2$~\cite{LPRS93}
and simple undirected graphs~\cite{LPRS93, Pel96b}.
Bermond et al.~\cite{BBPP03}
derive bounds on the minimum size of $1$-round monopolies
under the variant where a vertex $v$ is activated
or deactivated, respectively,
if the majority or minority of the vertices
within distance $r$ from $v$ are active,
%where
$r\ge 2$.
Peleg~\cite{Pel02} surveys
the above and
many
related results.

As an
%An
important variant,
%called the irreversible model,
the irreversible
% model
cascade
on a graph
% forbid
forbids the deactivation of vertices.
In its most general form,
%So
a vertex $v$ is activated
%and will stay active
in a round
%when
if
% sufficiently many
at least $\phi(v)$
of
its in-neighbors are active
in the previous round, where $\phi(v)\in\mathbb{N}$.
It is
% folklore that
sometimes defined as an asynchronous process
because
the order of activating the vertices
does not affect the set of vertices that will be
%activated.
active.
% at the end.
%eventually.
%the set of vertices
%the irreversible model
%is asynchronous
%The process ends when no more vertices can be activated.
%When $\phi(v)$
%When
%Denote by
%Let
%$d^\text{in}(v)$
%be
%the indegree of a vertex $v$.
%Analogously to
%Under the irreversible model
With various threshold functions $\phi(\cdot),$
%With majority thresholds,
%i.e., $\phi(v)=\lceil d^\text{in}(v)/2\rceil$
%or $\phi(v)=\lceil (d^\text{in}(v)+1)/2\rceil$
%where $d^\text{in}(v)$ is the indegree of a vertex $v,$
%the
irreversible
% model
%cascade
cascades
%describes
describe
the propagation of permanent faults in majority-based
systems~\cite{LPS99, FGS01, FKRRS03, FLLPS04},
spread of diseases~\cite{DR09},
complex propagation~\cite{CEM07, GC07, SS06},
%cascades in
socio-economic
% systems
cascades~\cite{Gra78, Wat02}
%cascading failures in infrastructure or organizational networks
and
cascading failures in infrastructure or organizational networks~\cite{Wat02}.
% and of opinions~\cite{DR09}.
%The case of a random $\phi(\cdot)$
%is suitable for describing
% individual behavior
%in
%global
%cascades
%socio-economic systems
Assuming
%For
irreversible cascades,
%Under the irreversible model,
bounds
are derived
on the minimum
% size of irreversible
%dynamic monopolies
number of seeds activating all vertices
% at the end
eventually
%are derived
for
%in
complete trees, rings, butterflies, wrapped butterflies,
cube-connected cycles, shuffle-exchange graphs, DeBruijn graphs,
hypercubes~\cite{LPS99, FKRRS03},
toroidal meshes~\cite{Luc98, FLLPS04, PZ05, KLV09, DR09},
torus cordales, torus serpentini~\cite{Luc98, FLLPS04, DR09},
chordal rings~\cite{FGS01},
multidimensional cubes~\cite{BP98},
complete multipartite
graphs, regular graphs~\cite{DR09},
%standard grid, circulant grid, toroidal grid
Erd\H{o}s-R\'enyi random graphs~\cite{CL09, CLTOCS, Cha10},
undirected connected graphs~\cite{CL10CIAC}
and
directed graphs with positive indegrees~\cite{CL10CIAC, ABW10}.
%Some of the literature
%Furthermore,
More
bounds are derived on
the minimum number of seeds
activating all vertices
in one round
for near-regular graphs,
bounded-degree graphs~\cite{Pel96b}
and, under several variants,
simple undirected graphs~\cite{BBPP03}.

\comment{%it's the irreversible variant. No need for details
Peleg~\cite{Pel96b}
% and Bermond et al.~\cite{BBPP03}
derives optimal bounds
on the
%The
minimum number of seeds activating all vertices
in $1$ round
%also
%studied
% extensively.
%In particular,
for near-regular and bounded-degree graphs.
Bermond et al.~\cite{BBPP03}
study
the same
%problem
quantity
in simple undirected graphs
under
%the variant
variants
where the state
of a vertex is determined
by majority votes among those within distance $r$ from it,
$r\in\mathbb{Z}^+$.
%For various values of $r,$
}%it's the irreversible variant. No need for details

A family of undirected graphs, $\{G_n(V_n,E_n)\mid |\,V_n\,|=n\}_{n=1}^\infty,$
is scale-free if there
%are constants
is a constant
$2<\gamma<3$
% and $C_1,C_2\ge 0$
such that
for sufficiently large $k\in\mathbb{Z}^+$
and as $n\to\infty,$
the fraction of
vertices with degree
% $k\in\mathbb{Z}^+$
$k$
in $G_n$
%lies in $[\,C_1/k^\gamma, C_2/k^\gamma\,]$
is proportional to $1/k^\gamma$.
%at least for a sufficiently large $k\in\mathbb{Z}^+$
%as $n\to\infty,$ where $k\in\mathbb{Z}^+$.
%Some definitions
%consider
Directed scale-free graphs are defined similarly
by using
%In the case of directed graphs,
%scale-freeness is defined
%in terms of
%using
%The definition is sometimes
%extended
%modified
%by considering
%only sufficiently large $k$
%Digraphs scale-free
%or,
% the in- and outdegree of
%in the case of directed graphs,
the in- or outdegrees of vertices
instead.
%in the case of directed graphs.
Many
% socio-economic networks,
%real-world networks are
% believed to be
%scale-free,
%including
%the Webgraph, power grids, scientific literature,
socio-economic, physical, biological and semantic networks
are scale-free.
Generative models
of scale-free networks
include
preferential attachment
models~\cite{ACL02, BA99, AB02, BRST01, MX06},
copying models~\cite{BJ04a, BJ04b, KKRRT99, KRRSTU00},
growth-deletion models~\cite{CFV03, CL03, FFV07},
random-surfer models~\cite{BCR06, CM08},
traffic-driven model~\cite{BBV04},
heuristically optimized trade-off model~\cite{FKP02},
hybrid models~\cite{PFLGG02, CL04b, PRU06},
semantic growth model~\cite{ST05}
and
random-graph models with given expected degree sequences~\cite{CL02, CL04a}.
Many of these suitably describe the Webgraph~\cite{Bon04, DLLM04}.

%This paper is motivated in part

Let $G(V,E)$ be an undirected connected graph
%For an undirected connected graph $G(V,E)$
with diameter $\Delta,$
$\gamma>2$ be a constant and $\rho\in (\,0,1\,]$
such that the fraction of vertices with degree $k$
in $G$ is $O(1/k^\gamma)$.
%$\rho\in (\,0,1\,]$ and a constant $\gamma>2$
%with $|\,\{v\in V\mid d(v)=k\}\,|/|\,V\,|=O(1/k^\gamma),$
%such that the fraction of vertices of degree $k$
%This paper shows that
This
%this
paper proves
$\text{min-seed}^{(\Delta)}(G,\rho)=O(\lceil\rho^{\gamma-1}\,|\,V\,|\rceil)$.
%Compared with Chang and Lyuu's~\cite{CLTOCS, Cha10}
%$\Theta(\min\{\delta,\rho\})$ bound on the minimum number of
%seeds activating all vertices in $G(n,p),$
%and $\text{min-seed}^\text{async}(G,\rho)=O(\rho^{\gamma-1}\,|\,V\,|)$.
%The number of rounds needed for our $O(\rho^{\gamma-1}\,|\,V\,|)$
%seeds to activate all vertices
As
scale-free graphs typically have
small distances between vertices~\cite{NSW01, CH03, BR04, CL04a, CL04b},
%we do not consider
activating all vertices within $\Delta$ rounds
may be
%efficient.
fast.
% considered fast.
%we do not expect a large $\Delta$.
%$\Delta$
Furthermore, the $O(\lceil\rho^{\gamma-1}\,|\,V\,|\rceil)$ bound
%holds
continues to hold
%remains true
even
if
%no matter how
the
% cascading process
synchronous
reversible cascade
% is synchronized.
is modified to proceed asynchronously instead.

For $n\in\mathbb{Z}^+$ and $p\in[\,0,1\,],$ the
%The
Erd\H{o}s-R\'enyi random graph $G(n,p)$
is a simple undirected graph with
% vertex set $\{1,2,\ldots,n\}$
vertices $1,2,\ldots,n$
where each of the $\binom{n}{2}$ possible edges
appears independently with probability $p$~\cite{Bol01}.
%where $p\in[\,0,1\,]$~\cite{Bol01}.
%Under
%Consider
%the irreversible model
%where each vertex of $G(n,p)$
%with degree $d>0$ is activated
Assuming irreversible cascades,
%Under the irreversible model,
Chang and Lyuu~\cite{CLTOCS, Cha10}
consider the
% irreversible model
case
where
%each
a
non-isolated
vertex of $G(n,p)$
% with a positive degree
%with degree $d>0$
is activated
when
at least a $\rho\in(\,0,1\,]$ fraction of its neighbors are active.
% it neighbors at least $\rho d$ active vertices.
They
prove
%the existence of a constant $\beta>0$
%such that
that
for
a sufficiently large constant $\beta>0,$
$n\in\mathbb{Z}^+,$
%$\rho\in(\,0,1\,],$
$\delta\in\{1/n,2/n,\ldots,n/n\},$
%and
%$p\in [\,\beta(\ln(e/{\min\{\delta,\rho\}}))/(\rho n),1\,]$
$p\ge \beta(\ln(e/{\min\{\delta,\rho\}}))/(\rho n)$
and with probability $1-n^{-\Omega(1)}$
over $G(n,p),$
the minimum number of seeds
needed to
eventually
activate
%activating
at least $\delta n$
vertices
% eventually
is $\Theta(\min\{\delta,\rho\}\,n)$.
The hidden constant in the $\Theta(\cdot)$ notation
is independent of $p$.
For
a sufficiently large constant $\lambda>0,$
$n\in\mathbb{Z}^+,$
%$p=\Omega((\ln(e/\rho))/n)$ with a sufficiently large hidden
%constant in the $\Omega(\cdot)$ notation
$p\ge\lambda(\ln(e/\rho))/n$
%with a sufficiently large constant $\lambda>0$
and with probability $1-n^{-\Omega(1)}$ over $G(n,p),$
they also prove
the existence of
%that
%a minimum of
$O(\lceil\rho n\rceil)$ seeds
%suffice to
that
activate
all vertices eventually.
%Their results motivate
%\comment{%saying so without citation looks like claiming finding it
\comment{%this is the arXiv file
With
more
careful calculations in their proofs,
$\lambda$ can in fact be
any constant
larger than $1$~\cite{CL10submitted}.
}%this is the arXiv file
%}%saying so without citation looks like claiming finding it

This paper proves that
for
$n\in\mathbb{Z}^+,$
$p=\Omega((\ln(e/\rho))/(\rho n))$ and
with probability $1-\exp{(-n^{\Omega(1)})}$
over $G(n,p),$
$\text{min-seed}^{(1)}(G(n,p),\rho)=O(\rho n)$.
Together with
Chang and Lyuu's $\Theta(\min\{\delta,\rho\}\,n)$ bound
for irreversible cascades with an
%unlimited number of rounds,
unbounded duration,
%we show as a corollary
our result shows
that
%the minimum number of seeds needed to activate all vertices
%$\Theta(\rho n)$
neither
%the number of rounds
%played
%nor
the
reversibility
%irreversibility
of
%the
% cascading process
synchronous
%cascade
cascades
nor the number of rounds played
can change the asymptotically minimum number of seeds needed to
activate all vertices for
$p\ge \beta(\ln(e/{\min\{\delta,\rho\}}))/(\rho n)$
with a sufficiently large constant $\beta$ --- it
is always $\Theta(\rho n)$.
Furthermore, our $O(\rho n)$ bound
continues to hold
even if
the
synchronous
% cascading process
reversible cascade
is modified to proceed asynchronously instead.

\comment{%maybe no need to define these terms
Furthermore, an
irreversible
$k$-round (resp., dynamic)
monopoly
%An irreversible
%are set
is
%refers to
a set of seeds activating all vertices
in or before round $k$ (resp., a finite number of rounds)
%under the irreversible
%model and
with
majority thresholds.
%, i.e.,
%With
%$\phi(v)=\lceil d^\text{in}(v)/2\rceil$
%or $\phi(v)=\lceil (d^\text{in}(v)+1)/2\rceil,$
%the irreversible model
%describes the propagation of permanent fauls in majority-based systems.
%The case of random $\phi(v)$
%When $\phi(v)$ is chosen independently at random,
%the irreversible
%With $\phi(v)$
%Many bounds are derived
}%maybe no need to define these terms

\comment{%the graphs are now written in plural form
a
complete tree, ring, butterfly, wrapped butterfly,
cube-connected cycle, shuffle-exchange, DeBruijn,
hypercube~\cite{LPS99, FKRRS03},
toroidal mesh~\cite{Luc98, FLLPS04, PZ05, KLV09, DR09},
torus cordalis, torus serpentinus~\cite{Luc98, FLLPS04, DR09},
chordal ring~\cite{FGS01},
complete multipartite
graph, regular graph~\cite{DR09},
%standard grid, circulant grid, toroidal grid
Erd\H{o}s-R\'enyi random graph~\cite{CL09, CLTOCS, Cha10}
and
directed graphs with positive indegrees.
}%the graphs are now written in plural form

\comment{%Just split it into other paragraphs...
%Linial et al.~\cite{LPRS93}
%derive lower bounds on
Under the reversible model,
optimal
%lower
bounds are
%known
derived
on
%The
the
minimum size of $1$-round
%reversible
monopolies
%is studied
in
planar graphs, hypercubes, graphs with a given girth,
%and
graphs with diameter $2$~\cite{LPRS93}
and general graphs~\cite{LPRS93, Pel96b}.
%Peleg~\cite{Pel96b} gives another lower bound
%Their bounds are optimal for infinitely many graphs.
%Peleg~\cite{Pel96b}
%studies the immunity
%derives bounds on the minimum
Assuming irreversible cascades,
%Under the irreversible model,
instead,
%there are also
Peleg~\cite{Pel96b} derives
optimal bounds for
near-regular and bounded-degree graphs.
Bermond et al.~\cite{BBPP03}
study the variant where
the state (i.e., active or inactive)
of
a vertex
% is activated or deactivated
%in a round
%if
%the majority of its
%based on
is determined by majority votes among those within
% a certain
distance $r$ from it.
%, where $r\in\mathbb{Z}^+$.
For various values of $r,$ they
%They
give optimal or nearly optimal
%derive
%obtain
bounds
on the minimum size of $1$-round
%reversible or irreversible
monopolies.
%for various values of $r$.
%A nice survey
Peleg~\cite{Pel02} surveys
the above and
%other
many
related results.
}%Just split it into other paragraphs...

%Linial et al.~\cite{LPRS93}
%show
%that
%an $\Omega(\sqrt{|\,E\,|}\,)$ bound on
% all
%the minimum size of
%$1$-round monopolies in
%any
%an
%undirected graph $G(V,E)$.
%have
%$\Omega(\sqrt{|\,V\,|})$ vertices.
%is $\Omega(\sqrt{|\,E\,|}\,)$ in general,
%Tight bounds are also derived for planar graphs, hypercubes,
%graphs with
%girth $g$
%girth-$g$ and diameter-$2$
%$\Omega(|\,V\,|)$ if $G$ is planar,
%$n^{1-O(1/g)}$ if $G$ has girth $g$
%$\Omega(n^{2/3})$ if $G$ has diameter $2$

Other related topics
include
periodic behavior of synchronous reversible
cascades~\cite{GFP85, GO80, PS83, PT86, GH00, Mor94a, Mor94b, Mor95},
inapproximability of minimum-seed problems~\cite{Che08, CL10CIAC, ABW10},
majority consensus computers~\cite{MP01},
stable sets of active vertices~\cite{Agu87, AFK88, Agu91, Gra91, Kra01},
network decontamination~\cite{LPS07, LP07},
maximization of social influence~\cite{DR01, RD02, KKT03, KKT05, MR07}
and percolation theory~\cite{AS94},
among others.

This paper is organized as follows.
Sec.~\ref{definitionssection}
%defines
presents
the notations
and the preliminaries.
Sec.~\ref{generalboundssection}
derives
general bounds
on the minimum number of seeds needed to activate
all vertices in a number of rounds.
Secs.~\ref{connectedscalefreegraphssection}--\ref{ErdosRenyirandomgraphssection}
investigate the cases of connected scale-free
and Erd\H{o}s-R\'enyi random graphs, respectively.
Sec.~\ref{conclusionssection}
concludes the paper.

\section{Definitions}\label{definitionssection}

A
% simple
directed graph
% (or digraph for short)
$G(V,E)$
% or digraph for short,
consists of
%a vertex set, $V,$ and an edge set
a set $V$ of vertices and a set
%$E$
$E\subseteq V\times V$
of edges.
An edge $(u,v)\in E$ goes from $u$ to $v$.
%without self-loops,
%$E\subseteq (V\times V)\setminus \{(v,v)\mid v\in V\}$~\cite{Wes01}.
%of edges without self-loops~\cite{Wes01}.
%A simple
An
undirected graph is a directed one
with each edge accompanied by the edge in the opposite direction.
Unless otherwise specified,
all graphs in this paper are simple, i.e.,
self-loops are not allowed~\cite{Wes01}.
For $v\in V,$ define
\begin{eqnarray*}
N^\text{in}(v)&\equiv&\left\{u\in V\mid (u,v)\in E\right\},\\
N^\text{in}[v]&\equiv& N^\text{in}(v)\cup\{v\}
\end{eqnarray*}
to be
%as
the open and closed in-neighborhoods of $v,$
respectively.
The indegree of $v$ is $d^\text{in}(v)\equiv |\,N^\text{in}(v)\,|$.
For $A\subseteq V,$
define $N^\text{in}(A)\equiv \cup_{a\in A}\, N^\text{in}(a)$
%In this paper,
as
%to be
the set of vertices incident on an edge coming into $A$.
%The notations
In case $G$ is undirected,
write
$N^\text{in}(\cdot),$ $N^\text{in}[\cdot]$
and $d^\text{in}(\cdot)$ simply as
$N(\cdot),$ $N[\cdot]$
and $d(\cdot),$ respectively.
% for undirected graphs.
%If $G$ is undirected,
For an undirected graph $G(V,E),$
$U\subseteq V$ and $i\ge 0,$
define $N^i[U]$ to be the set of vertices
with distance less than or equal to $i$
from at least one vertex in $U$.
That is, $N^0[U]=U$
%$N^1[U]=U\cup N(U)$
and $N^{i+1}[U]= N^i[U]\cup N(N^i[U])$ for $i\ge 0$.
Furthermore,
each vertex
%of $G$
can be in one of two states, active or inactive.

The
synchronous reversible cascade
% cascading process
on $G(V,E)$
with
%$S\subseteq V$ as the set of seeds
seed set $S\subseteq V$ and threshold $\rho\in (\,0,1\,]$
proceeds in rounds.
%Assume that
In round zero,
only the vertices in $S,$ called the seeds, are active.
For each $i\in\mathbb{N},$
a vertex with a positive indegree
is activated or deactivated in round $i+1,$
respectively,
if at least or less than a $\rho$ fraction
of its in-neighbors are active in round $i$.
%respectively.
Vertices with indegree zero, instead,
%are neither activated nor deactivated
%in any round.
%stay
never change their states.
For $k\in\mathbb{N},$
define $\text{\sf Active}^{(k)}(S,G,\rho)$
%as
to be
the set of active vertices in round $k$.
Then
define
$$\text{min-seed}^{(k)}\left(G,\rho\right)\equiv
\min_{W\subseteq V, \text{\sf Active}^{(k)}(W,G,\rho)=V}\,
|\,W\,|,$$
%to be
which is
the
minimum number of seeds needed to activate all vertices
in round $k$ (it is possible that a set of seeds activates
all vertices in round $k$ by doing so before round $k$).
Clearly, once all vertices are active, they will remain active
forever.
Therefore, $\text{min-seed}^{(k)}(\cdot,\cdot)$ monotonically decreases
as $k$ increases.

Next,
we
%We
%now
describe
%the
an
%An asynchronous process is defined similarly except that
%In an
%the
asynchronous
%cascading
%process.
reversible cascade.
%variant.
% of
%cascading
%activation.
%a number of
At any instant,
one or more
vertices
% can
may
update their states.
%be activated or deactivated
%at any instant.
When
a vertex with a positive indegree
updates its state,
it is activated or deactivated
if at least or less than a $\rho$ fraction of
its in-neighbors are active, respectively.
% also updates its state
%vertices can be activated or deactivated
%can be activated
% (resp., deactivated)
%whenever at least
% (resp., less than)
%a $\rho$ fraction of
%its in-neighbors are active.
%It can be deactivated otherwise.
%Still,
Instead,
vertices
%vertices
with indegree zero
% can be neither activated nor deactivated.
%cannot
never
update their states.
The only requirement on
% synchronizing the
the
synchronization
%timing
of updates
is that,
if the state of a vertex can change at time $t,$
%when a vertex can change its state,
then
at least one vertex must change its state after time $t,$ where
$t\ge 0$.
Such a synchronization is said to be progressive.
%not all vertices
In other words,
whenever changes of states are possible,
an asynchronous
%cascading
%process
reversible cascade
% of activation
cannot simply
%halt and
%reject
refuse
%resist
% all changes of states from then on.
them
% altogether and
all
forever.
%We call such a synchronization progressive.
%Such
%synchronizations are
%said to be progressive.
%Furthermore,
%vertices can be activated or deactivated in any order,
%and more than one vertex can be activated or deactivated
%at the same instant.
%Clearly,
%Two asynchronous processes of cascading activation
%If every asynchronous
%For $s\in\mathbb{Z}^+,$
%If every asynchronously process with seed set $S$
%and threshold $\rho$
%result in $T$ being the set of active vertices
%from a point in time to
%Define $\text{\sf Active}(S,)$
Define $\text{min-seed}^\text{async}(G,\rho)$
to be the minimum number of seeds
% such that
that activate all vertices within a finite amount of time
in
every
%{\em all}
asynchronous
%cascading
%processes
reversible
%cascades
cascade
%process
% of activation
with threshold $\rho$.
So
%In other words,
$\text{min-seed}^\text{async}(G,\rho)\le s$
%means that
%if and only if
precisely when
there exist $s$ seeds
%that activate
%is the minimum number of seeds
activating
all vertices
within a finite amount of time
% after some time
% is minimum number of seeds
%that activate all vertices and
%under
%all
%possible synchronizations
%whatever
%progressive
%synchronizations
%of activations and deactivations.
%of updates.
no matter how
%activations and deactivations
the updates of states
are synchronized in a progressive way.
%Clearly,
%$\text{min-seed}^\text{async}(G,\rho)\le$
Furthermore,
define $\text{min-seed}^\text{async}_\text{monotone}(G,\rho)$
to be the minimum number of seeds
meeting the following criteria
in
%all
every
asynchronous reversible
%cascades
cascade
with threshold $\rho$:
\begin{itemize}
\item All vertices are active after a finite amount of time;
\item No active vertices are ever deactivated.
\end{itemize}
So $\text{min-seed}^\text{async}_\text{monotone}(G,\rho)\le s$
%means that
precisely when
there exist $s$ seeds
that, regardless of the (progressive) synchronizations
of the reversible cascades,
activate all vertices without ever deactivating any active
one.

The following
%lemma
%two
fact
%are
is
folklore.
See, e.g.,~\cite[pp.~25, Sec.~2]{FGS01}.
%For completeness,
%we
% include a proof of it.
%prove it in the appendix.
% for completeness.

\begin{fact}\label{ifseedsarestablethentheprocessismonotone}
Let $G(V,E)$ be a
%digraph,
directed graph,
$\rho\in(\,0,1\,],$
$t\in\mathbb{Z}^+$
and $S\subseteq V$
satisfy
%If
\begin{eqnarray}
S&\subseteq& \text{\sf Active}^{(1)}\left(S,G,\rho\right),
%\nonumber
\label{sayingtheseedsarestable}
\\
V&=&\text{\sf Active}^{(t)}\left(S,G,\rho\right).\nonumber
\end{eqnarray}
%Then for $k\in\mathbb{N},$
%$$
%%\bigcup_{k=0}^\infty\,
%\text{\sf Active}^{(k)}\left(S,G,\rho\right)
%%V
%%=
%\subseteq
%\text{\sf Active}^{\text{\rm async}}
%\left(S,G,\rho\right).$$
%\comment{%introduced a new notation for simplification
Then
$$\text{\rm min-seed}^\text{\rm async}_\text{\rm
monotone}\left(G,\rho\right)\le |\,S\,|.$$
\comment{%old statement needs to define Active^async(G,rho)
%, for whatever
no active vertices can ever be deactivated
in any asynchronous
%cascading
%process
reversible cascade
%of activation
with
% $S$ as the set of seeds
%and $\rho$ the threshold.
seed set $S$ and threshold $\rho$.
% in any round
Furthermore,
for $k\in\mathbb{N},$
$$
%\bigcup_{k=0}^\infty\,
\text{\sf Active}^{(k)}\left(S,G,\rho\right)
%V
%=
\subseteq
\text{\sf Active}^{\text{\rm async}}\left(S,G,\rho\right).
$$
%}%introduced a new notation for simplification
}%old statement needs to define Active^async(G,rho)
\comment{%maybe put this in another lemma
Furthermore,
$$
%\text{\sf Active}^{(\infty)}\left(S,G,\rho\right)
V
=
\text{\sf Active}^{\text{\rm async}}\left(S,G,\rho\right).$$
}%maybe put this in another lemma
\end{fact}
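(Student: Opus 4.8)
The plan is to show that the seed set $S$ itself already realizes the bound: that $S$ activates every vertex and never causes any deactivation under every progressive asynchronous synchronization, which by definition gives $\text{min-seed}^{\text{async}}_{\text{monotone}}(G,\rho)\le|S|$. Write $f$ for the one-round synchronous update operator (with $G$ and $\rho$ fixed), so that $\textsf{Active}^{(i)}(S,G,\rho)=f^{i}(S)$. The crucial structural fact is that $f$ is monotone: enlarging the active set can only enlarge the set of positive-indegree vertices meeting the $\rho$-threshold, while vertices of indegree zero simply keep their state. In this language hypothesis~\eqref{sayingtheseedsarestable} reads $S\subseteq f(S)$ (that is, $S$ is \emph{stable}), and the second hypothesis reads $f^{t}(S)=V$.

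First I would settle the monotone (no-deactivation) half through an invariant maintained along the asynchronous updates. I claim that in any progressive execution started from $S$ the current active set $B$ is always stable, $B\subseteq f(B)$, and never shrinks. Both properties hold initially, since $B=S$ and $S\subseteq f(S)$. For the inductive step, suppose a batch $U$ of vertices update simultaneously out of a stable set $B$: every $v\in B\cap U$ still sees at least a $\rho$ fraction of its in-neighbors in $B$ and therefore stays active, while every $v\notin B$ can only turn active; hence the new active set $B'$ contains $B$. That $B'$ is again stable is immediate: each of its positive-indegree members already had a $\rho$ fraction of in-neighbors inside $B\subseteq B'$, and its indegree-zero members never change. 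Thus the active set is non-decreasing throughout, so no active vertex is ever deactivated.

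Then I would show that every progressive execution actually reaches $V$. Because $V$ is finite and the active set only grows, it stabilizes after finitely many changes at some set $B^{\ast}\supseteq S$. Progressivity forces $B^{\ast}$ to be a fixed point of $f$: were some vertex still able to change state, a change would be required to happen later, contradicting stabilization. I would then transport the synchronous reachability hypothesis to $B^{\ast}$ using monotonicity of $f$: from $S\subseteq B^{\ast}$ and $f(B^{\ast})=B^{\ast}$ one obtains $f^{i}(S)\subseteq B^{\ast}$ for every $i$ by induction, so $V=f^{t}(S)\subseteq B^{\ast}$ and hence $B^{\ast}=V$. Combining the two halves, $S$ activates all vertices without any deactivation under every progressive synchronization, which is exactly the claimed bound.

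The main obstacle is the careful treatment of the asynchronous schedule: since arbitrarily many vertices may update at once and the schedule is adversarial up to progressivity, the stability invariant has to be checked for simultaneous batches rather than single updates, and the passage to the limit must invoke progressivity precisely --- monotonicity of the active set alone only yields stabilization, not that the limit is a genuine fixed point of $f$. Once these two points are nailed down, monotonicity of $f$ together with the hypothesis $f^{t}(S)=V$ finishes the argument, the remaining verifications being routine.
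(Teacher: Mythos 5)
Your proposal is correct, and it is both more complete and differently organized than what the paper offers. The paper treats this fact as folklore (citing Flocchini et al.) and its own argument, which survives only as a commented-out appendix in the source, is a minimal-counterexample proof: assume some active vertex is deactivated, take $v$ among the first such vertices, and derive a contradiction separately for $v\in S$ (using the stability hypothesis $S\subseteq\textsf{Active}^{(1)}(S,G,\rho)$) and for $v\notin S$ (using the fact that $v$'s active in-neighbors cannot have decreased before the first deactivation). That sketch establishes only the no-deactivation half and leaves the claim that every progressive schedule actually activates all of $V$ to folklore about monotone processes being schedule-invariant. You instead run a forward invariant --- the current active set $B$ always satisfies $B\subseteq f(B)$ and is non-decreasing under batch updates --- and then close the second half explicitly: finiteness of $V$ gives stabilization at some $B^{\ast}$, progressivity forces $f(B^{\ast})=B^{\ast}$, and monotonicity of $f$ pushes $f^{t}(S)=V$ into $B^{\ast}$. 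The paper's contradiction argument is shorter for the monotonicity half; your invariant-plus-fixed-point argument buys a self-contained proof of the full statement, including the part the paper does not write down, and your explicit handling of simultaneous batch updates and of the role of progressivity (stabilization alone does not give a fixed point) addresses exactly the points the folklore citation glosses over. No gaps.
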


\comment{%combined into the above fact
%Below is another folklore fact.
%Observe that
Assuming that
%When
active vertices remain active forever,
it is
folklore
%not hard to see
that
the
% order
timing
of activating the vertices does not
affect the set of vertices that
%can
will
be
active.
%activated
%eventually.
% at the end.
%The following folklore
%fact
%follows from Fact~\ref{ifseedsarestablethentheprocessismonotone}
%and the observation that the order of activating the vertices
Therefore,
%we have the following consequence of
Fact~\ref{ifseedsarestablethentheprocessismonotone}
has the following consequence, which is also folklore.
%implies the following fact.

\begin{fact}\label{ifseedsarestablethensynchronizationinvariant}
Let $G(V,E)$ be a
% digraph,
directed graph,
$\rho\in(\,0,1\,]$ and $S\subseteq V$
satisfy
%\begin{eqnarray}
%S&\subseteq& \text{\sf Active}^{(1)}\left(S,G,\rho\right),
%\label{sayingtheseedsarestable}
%\end{eqnarray}
Eq.~(\ref{sayingtheseedsarestable}).
Then
for $k\in\mathbb{N},$
$$
%\bigcup_{k=0}^\infty\,
\text{\sf Active}^{(k)}\left(S,G,\rho\right)
%V
%=
\subseteq
\text{\sf Active}^{\text{\rm async}}\left(S,G,\rho\right).$$
\end{fact}
%\begin{proof}
%By Fact~\ref{ifseedsarestablethentheprocessismonotone},
%no active vertices can ever be deactivated in any asynchronous process
%with $S$ as the set of seeds.
%Hence a vertex that can be activated in an asynchronous process
%with $S$ as the set of seeds
%Hence the synchronization cannot affect
%\end{proof}
}%combined into the above fact

The following
is Markov's inequality.

\begin{fact}(\cite[Theorem 3.2]{MR95})\label{Markovinequality}
Let $X$ be a random variable taking nonnegative values.
Then for each $t>0,$
$$\Pr\left[\,X\ge
t
\,\right]
\le
\frac{E[\,X\,]}{t}.$$
\end{fact}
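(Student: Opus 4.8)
The plan is to reduce the inequality to a single pointwise comparison between two random variables and then invoke the monotonicity and linearity of expectation. Fix $t>0$, write $A$ for the event $\{X\ge t\}$, and let $\mathbf{1}_A$ denote its indicator random variable, equal to $1$ on $A$ and $0$ on the complement of $A$.

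The key step is to establish the pointwise bound
$$X \;\ge\; t\,\mathbf{1}_A,$$
which I would verify by cases. On $A$ one has $\mathbf{1}_A=1$ and, by the definition of $A$, $X\ge t=t\,\mathbf{1}_A$; off $A$ one has $\mathbf{1}_A=0$, so the right-hand side is $0$ and the inequality reduces to $X\ge 0$, which is precisely the hypothesis that $X$ takes nonnegative values. This off-$A$ case is the sole place the nonnegativity assumption enters, and it is indispensable, since negative values of $X$ off $A$ could otherwise depress $E[X]$ and break the bound.

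With the pointwise inequality in hand, I would take expectations of both sides. Monotonicity of expectation yields $E[X]\ge E[t\,\mathbf{1}_A]$, and linearity together with the standard identity $E[\mathbf{1}_A]=\Pr[A]$ gives $E[t\,\mathbf{1}_A]=t\,\Pr[X\ge t]$. Combining, $E[X]\ge t\,\Pr[X\ge t]$, and dividing through by the strictly positive $t$ produces the claimed $\Pr[X\ge t]\le E[X]/t$. Because the argument is entirely elementary, there is no genuine obstacle; the only points requiring care are correctly locating the nonnegativity hypothesis in the off-$A$ case and noting the degenerate situation $E[X]=\infty$, under which the asserted bound holds vacuously.
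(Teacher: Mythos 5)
Your proof is correct: the pointwise bound $X \ge t\,\mathbf{1}_{\{X\ge t\}}$ followed by monotonicity and linearity of expectation is the standard argument for Markov's inequality, and you correctly isolate where nonnegativity is used. The paper itself offers no proof, stating the result as a cited fact from Motwani and Raghavan, and your argument is essentially the canonical one found in that reference, so there is nothing to reconcile.
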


We will use the following form of Chernoff's bound~\cite{Che52}.

\begin{fact}
(\cite[Theorem 4.2]{MR95})
\label{Chernofflowertail}
Let
%$p\in[\,0,1\,]$ and
$X_1,X_2,\ldots, X_k$ be independent
random variables
taking values in $\{0,1\}$
and $p\in[\,0,1\,]$
such that
$\Pr[\,X_i=1\,]=p$ for $1\le i\le k$.
Then for each $\delta\in (0,1),$
$$\Pr\left[\,\sum_{i=1}^k X_i \le (1-\delta)\,
kp\,\right]
\le \exp{\left(-\frac{\delta^2 kp}{2}\right)}
$$
\end{fact}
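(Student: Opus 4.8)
The plan is to apply the standard exponential-moment (Chernoff) method, using the Markov inequality already recorded as Fact~\ref{Markovinequality}. Write $X=\sum_{i=1}^k X_i$ and $\mu=E[X]=kp$. For a parameter $t>0$ to be chosen later, the target event $\{X\le(1-\delta)\mu\}$ coincides with $\{e^{-tX}\ge e^{-t(1-\delta)\mu}\}$, because $s\mapsto e^{-ts}$ is decreasing. Since $e^{-tX}$ takes only nonnegative values, Fact~\ref{Markovinequality} applied to it gives
$$\Pr\left[\,X\le(1-\delta)\mu\,\right]\le e^{t(1-\delta)\mu}\,E\left[e^{-tX}\right].$$

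Next I would compute the moment generating function of $-X$ via independence. Because the $X_i$ are independent and each equals $1$ with probability $p$,
$$E\left[e^{-tX}\right]=\prod_{i=1}^k E\left[e^{-tX_i}\right]=\left(1-p\left(1-e^{-t}\right)\right)^k.$$
Applying the elementary bound $1+x\le e^x$ with $x=-p(1-e^{-t})$ yields $E[e^{-tX}]\le\exp(-\mu(1-e^{-t}))$, so the probability is at most $\exp(\mu[\,t(1-\delta)-(1-e^{-t})\,])$. I would then optimize the exponent over $t>0$: differentiating $t(1-\delta)-(1-e^{-t})$ in $t$ and setting the result to zero gives $e^{-t}=1-\delta$, i.e.\ $t=\ln(1/(1-\delta))$, which is positive precisely because $\delta\in(0,1)$. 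Substituting this choice, and using $1-e^{-t}=\delta$, collapses the bound to
$$\Pr\left[\,X\le(1-\delta)\mu\,\right]\le\exp\!\left(\mu\left[-(1-\delta)\ln(1-\delta)-\delta\right]\right).$$

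The main obstacle — and the only place where genuine work beyond bookkeeping is needed — is to convert this sharp but unwieldy estimate into the clean form $\exp(-\delta^2\mu/2)$ claimed in the statement. This reduces to the purely analytic inequality $-(1-\delta)\ln(1-\delta)-\delta\le-\delta^2/2$ on $(0,1)$, equivalently $g(\delta)\ge 0$, where $g(\delta)=(1-\delta)\ln(1-\delta)+\delta-\delta^2/2$. I would establish this by a second-derivative argument: one checks $g(0)=0$ and $g'(\delta)=-\ln(1-\delta)-\delta$ with $g'(0)=0$, while $g''(\delta)=\delta/(1-\delta)\ge 0$ throughout $(0,1)$. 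Hence $g'$ is nondecreasing and therefore nonnegative, so $g$ is nondecreasing from $g(0)=0$, giving $g\ge 0$. Combining this elementary inequality with the optimized estimate above produces the stated bound and completes the proof.
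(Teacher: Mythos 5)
Your proof is correct, and it matches the paper's treatment: the paper does not prove this bound at all but states it as a citation to \cite[Theorem 4.2]{MR95}, whose proof is exactly your argument --- Markov's inequality (the paper's Fact~\ref{Markovinequality}) applied to $e^{-tX}$, independence to factor the moment generating function, the bound $1+x\le e^x$, optimization at $e^{-t}=1-\delta$, and finally the elementary inequality $(1-\delta)\ln(1-\delta)+\delta\ge \delta^2/2$. Your second-derivative verification of that last inequality (in place of the Maclaurin-expansion argument in the cited text) is a minor, correct variation, so there is nothing to fix.
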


For $n\in\mathbb{Z}^+$ and $p\in[\,0,1\,],$
the Erd\H{o}s-R\'enyi random graph $G(n,p)$
is the simple undirected graph
with vertices
$1,2,\ldots,n$
where each of the possible $\binom{n}{2}$ edges
appears independently with probability $p$~\cite{Bol01}.
Below is an easy consequence of Chernoff's bound.

\begin{fact}(\cite[Lemma~9]{CLTOCS})\label{badvertices}
Let $n\in\mathbb{Z}^+,$ $p\in[\,0,1\,]$ and
$\kappa\in \{1/n,\ldots,n/n\}$. If
\begin{eqnarray*}
0.99\le \frac{\binom{\kappa n}{2}+\kappa n(n-\kappa n)}
{\kappa n^2} \le 1,
%\label{secondlemmacondition}
\end{eqnarray*}
then
$$\Pr\left[\,\left|\,\left\{v\in [\,n\,]\,\mid\,
%\text{\rm deg}
d(v)
\le
\frac{pn}{2}\right\}\,\right| \ge
\kappa n\,\right]
\le \binom{n}{\kappa n} \exp{\left(
-\frac{\kappa pn^2}{9}\right)},$$
where the probability is taken over the Erd\H{o}s-R\'enyi
random graphs
$G(n,p)$.
\end{fact}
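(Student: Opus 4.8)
The plan is to bound the probability by a union bound over all candidate sets of low-degree vertices, and then, for one fixed such set, to pass from the correlated per-vertex degree conditions to a single sum of independent edge indicators to which Fact~\ref{Chernofflowertail} applies. Since $\kappa\in\{1/n,\ldots,n/n\}$, the quantity $\kappa n$ is a positive integer, so sets of that size exist. Write $[n]=\{1,\ldots,n\}$ and call a vertex $v$ \emph{bad} if $d(v)\le pn/2$. If at least $\kappa n$ vertices are bad, then some $A\subseteq[n]$ with $|A|=\kappa n$ consists entirely of bad vertices, so
$$\Pr\left[\,\left|\{v\in[n]\mid d(v)\le pn/2\}\right|\ge\kappa n\,\right]\le\binom{n}{\kappa n}\,\Pr\left[\,\text{every }v\in A\text{ is bad}\,\right]$$
for an arbitrary fixed $A$ of size $\kappa n$, there being $\binom{n}{\kappa n}$ such sets.

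The key reduction is to replace the joint low-degree event by a lower bound on a genuinely independent sum. Let $X_A$ be the number of edges of $G(n,p)$ incident to $A$. Each of the $M\equiv\binom{\kappa n}{2}+\kappa n(n-\kappa n)$ potential incident edges is present independently with probability $p$, so $X_A$ is a sum of $M$ independent $\{0,1\}$-valued variables with $E[X_A]=Mp$. Counting incidences, $\sum_{v\in A}d(v)=2e(A)+e(A,\bar A)\ge e(A)+e(A,\bar A)=X_A$, where $e(A)$ and $e(A,\bar A)$ count edges inside $A$ and across the cut. Hence if every $v\in A$ is bad then $X_A\le\sum_{v\in A}d(v)\le\kappa n\cdot(pn/2)=\kappa pn^2/2$, and it suffices to bound $\Pr[X_A\le\kappa pn^2/2]$.

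I would finish by applying Fact~\ref{Chernofflowertail} to $X_A$. Choosing $\delta$ so that $(1-\delta)\,Mp=\kappa pn^2/2$ gives $\delta=1-\kappa n^2/(2M)$, and here the hypothesis $0.99\le M/(\kappa n^2)\le 1$ does the real work: $M\le\kappa n^2$ forces $\delta\le 1/2$, while $M\ge 0.99\,\kappa n^2$ forces $\delta\ge 1-1/(2\cdot 0.99)$, so $\delta$ lies safely in $(0,1)$ near $1/2$. The same two-sided bound keeps $E[X_A]=Mp$ comparable to $\kappa pn^2$, and substituting the resulting $\delta^2$ and $Mp$ into the Chernoff estimate $\exp(-\delta^2 Mp/2)$ yields $\Pr[X_A\le\kappa pn^2/2]\le\exp(-\kappa pn^2/9)$, the constant $1/9$ coming out of the arithmetic $(\delta^2/2)\cdot(M/(\kappa n^2))\ge 1/9$. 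Combining this with the union bound gives the claimed inequality.

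The main obstacle I anticipate is conceptual rather than computational: the events $\{d(v)\le pn/2\}$ for $v\in A$ are not independent, since an edge inside $A$ is shared by two of these vertices, so one cannot simply multiply per-vertex tail bounds. The device that resolves this is the passage to $X_A$, the total number of incident edges, which is an honest sum of independent indicators and is dominated by the degree sum; once that step is in place the remainder is a routine lower-tail estimate, and the only care needed is to check that the hypothesis on $M/(\kappa n^2)$ simultaneously pins $\delta$ near $1/2$ and keeps $E[X_A]$ of order $\kappa pn^2$ well enough to extract the constant $1/9$.
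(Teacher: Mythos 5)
The paper does not prove this statement---it imports it verbatim as Lemma~9 of~\cite{CLTOCS}---so there is no in-paper argument to compare against; but your proof is correct, and the union bound over $\binom{n}{\kappa n}$ candidate sets combined with the lower-tail Chernoff bound on the independent count of edges incident to a fixed set (whose $M=\binom{\kappa n}{2}+\kappa n(n-\kappa n)$ potential edges are exactly the quantity appearing in the hypothesis) is evidently the intended derivation. The arithmetic checks out: the hypothesis pins $\delta=1-\kappa n^2/(2M)$ into $[49/99,1/2]$, and $(\delta^2/2)\cdot(M/(\kappa n^2))\ge(49/99)^2\cdot 0.99/2\approx 0.121>1/9$, which yields the stated exponent.
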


%We refer to the
%variant of our
The
irreversible
% model
cascade
is the
% same as
%variant
modification
of
our synchronous
reversible
cascade
that prohibits the deactivation of vertices.
%As a folklore fact,
It can
also
be defined
%either
as
% a synchronous process
%or
an asynchronous
% one
process
without affecting the set of
%active vertices at the end.
vertices that will be
%activated
active~\cite[pp.~25, Sec.~2]{FGS01}.
%in a finite amount of time.
%synchronous and asynchronous
%cascades
%can also be
% defined under the irreversible model
%irreversible
%by prohibiting active vertices to be deactivated.
%The following result on the irreversible model
%is due to Chang and Lyuu~\cite{CLTOCS, Cha10}.
Assuming irreversible cascades,
Chang and Lyuu~\cite{CLTOCS, Cha10} prove
the following bound on activating vertices
of Erd\H{o}s-R\'enyi
random graphs.
%under the irreversible model.
%which is the same as our cascades

\begin{fact}(\cite{CLTOCS, Cha10})
\label{themainresultonirreversiblecascadesinERgraphs}
Assume irreversible cascades.
For a sufficiently large constant $\beta>0,$
%There exists a constant $\beta>0$ such that for
$n\in\mathbb{Z}^+,$
$\rho\in(\,0,1\,],$
$\delta\in\{1/n,2/n,\ldots,n/n\},$
$p\ge \beta(\ln{(e/\min\{\delta,\rho\})})/(\rho n)$
and with probability $1-n^{-\Omega(1)}$ over $G(n,p),$
the minimum number of seeds needed to eventually activate
at least $\delta n$
vertices of $G(n,p)$ is $\Theta(\min\{\delta,
\rho\}\,n)$.
%assuming irreversible cascades.
%under the irreversible model.
The hidden constant in the $\Theta(\cdot)$ notation
is independent of $p$.
\end{fact}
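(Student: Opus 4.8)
The plan is to establish matching bounds $O(mn)$ and $\Omega(mn)$ with $m\equiv\min\{\delta,\rho\}$, treating the regimes $\delta\le\rho$ and $\delta>\rho$ separately. The probabilistic input should be only the degree concentration furnished by Chernoff's bound (Fact~\ref{Chernofflowertail}) and the control on atypically low-degree vertices in Fact~\ref{badvertices}; the rest is a deterministic cascade argument on a ``typical'' graph. The hypothesis $p\ge\beta(\ln(e/m))/(\rho n)$ enters through the single inequality $\rho pn\ge\beta\ln(e/m)$, which makes the typical number of active neighbors, $\rho pn$, large enough that each per-vertex deviation event has probability roughly $m^{\Theta(\beta)}$; tuning $\beta$ large will be what renders these probabilities summable.

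For the upper bound, the case $\delta\le\rho$ (so $m=\delta$) is immediate: directly seeding $\lceil\delta n\rceil$ vertices activates $\delta n$ of them at cost $O(mn)$. The content is the case $\delta>\rho$ (so $m=\rho$), where $O(\rho n)$ seeds must be amplified to $\delta n\ge\rho n$ active vertices. When $\rho=\Theta(1)$ this is trivial, since $O(\rho n)=\Theta(n)$ and one may simply seed every vertex; the interesting subregime is $\rho=o(1)$. There I would first move the $O(\rho n)$ vertices of degree below $pn/2$ permitted by Fact~\ref{badvertices} into the seed set, leaving every other vertex with degree $(1\pm\epsilon)pn$, and then seed a further $(1+\epsilon)\rho n$ vertices. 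An amplification lemma finishes the job: with the active fraction at $(1+\epsilon)\rho$, a typical vertex has expected active-neighbor count exceeding its threshold $\rho\,d(v)$ by a constant factor, so Chernoff's lower tail (Fact~\ref{Chernofflowertail}) makes it fail to activate with probability only $\rho^{\Theta(\beta)}$. A union bound leaves a $\rho^{\Theta(\beta)}$-fraction of vertices inactive after one round, after which a single mop-up round, in which nearly every neighborhood is nearly fully active, activates the remainder. The seed count stays $O(\rho n)=O(mn)$.

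For the lower bound I would show that fewer than a small constant times $mn$ seeds cannot cascade to $\delta n$ active vertices. The structural observation is that to reach $\delta n>\rho n$ the active set must first cross the \emph{critical density} $\rho$, and that crossing it is the hard step: for any active set $A$ with $|A|$ below $\rho n$ a vertex outside $A$ has expected active-neighbor count $(|A|/n)\,pn$ below its threshold $\rho pn$, so by Chernoff's upper tail it activates only with probability $\rho^{\Theta(\beta)}$. The delicate point is to make this survive a union bound over all $\binom{n}{\le\rho n}$ candidate sets $A$; using the sharp binomial form $\binom{n}{k}(\rho^{\Theta(\beta)})^{k}$ for the number of vertices that activate out of $A$, the per-set failure probability beats $\binom{n}{|A|}\le\exp(\rho n\ln(e/\rho))$ precisely when $\beta$ is a large enough constant. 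This shows the cascade stalls short of $\rho n$ whenever it starts below it, forcing $\Omega(\rho n)=\Omega(mn)$ seeds; when $\delta<\rho$ the same computation at density $\delta$ is an easier upper-tail estimate and yields $\Omega(\delta n)$, with the boundary $\delta\approx\rho$ merging into the critical case.

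The step I expect to be the main obstacle is this last union bound over exponentially many small sets in the lower bound: a crude tail estimate loses, and one must exploit both the independence of the activation events across vertices outside a fixed $A$ and the sharp binomial tail, the bound going through only because $\beta$ is large---which is exactly where the $\ln(e/m)$ factor in the degree hypothesis is spent. A secondary difficulty, present throughout because $p$ may be as small as $\Theta(1/n)$ and $\rho$ may range over all of $(\,0,1\,]$, is that degrees and neighborhood intersections concentrate well enough only for \emph{most} vertices; the argument must uniformly tolerate the $O(\rho n)$ bad vertices of Fact~\ref{badvertices} and absorb them into the seeds, all while keeping the constant hidden in $\Theta(\cdot)$ independent of $p$.
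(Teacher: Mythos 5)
This statement is not proven in the paper at all: it is imported verbatim as a \emph{Fact} from the cited references \cite{CLTOCS, Cha10}, and the present paper uses it as a black box (chiefly to argue that its own $O(\rho n)$ upper bounds for the reversible model are tight). So there is no in-paper proof to compare your proposal against, and your attempt should be judged as a reconstruction of the cited Chang--Lyuu argument rather than of anything in this document.

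As such a reconstruction, your upper-bound half is essentially the standard and correct route: handle $\delta\le\rho$ by direct seeding, absorb the $O(\rho n)$ low-degree vertices of Fact~\ref{badvertices} into the seed set, and use Chernoff's lower tail to show that $(1+\epsilon)\rho n$ random seeds leave only a $\rho^{\Theta(\beta)}$ fraction of vertices below threshold, which a further round mops up. The lower-bound half is where your sketch does not yet close, and you correctly identify the spot. The difficulty is not merely that the union bound over $\binom{n}{\le\rho n}$ sets is tight; it is that the active set $A$ is determined adaptively by the graph, so ``Chernoff for a fixed $A$, then union bound'' must be reorganized around a witness structure: one exposes the first $m$ activated non-seed vertices in activation order and observes that each certifies $\ge\rho\,d(v)=\Omega(\rho pn)$ edges into a set of size $s+m$, so the event being union-bounded is ``there exists a set $U$ of size $s+m=o(\rho n)$ and $m$ vertices each with $\Omega(\rho pn)$ neighbors in $U$,'' whose probability is controlled by counting edges into $U$ rather than by the per-vertex activation probability alone. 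Your phrase ``independence of the activation events across vertices outside a fixed $A$'' is the right ingredient for a \emph{fixed} $A$, but the sequential growth of the active set is exactly what forces the witness/edge-counting formulation, and without it the argument as written has a gap. Since the paper offers no proof of this Fact, the honest verdict is: plausible outline, upper bound sound, lower bound incomplete at the step you yourself flagged.
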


\section{General bounds}\label{generalboundssection}

We begin with several lemmas that will be
useful for analyzing
reversible cascades on scale-free graphs.

\comment{%old description (beginning of the section)
%We begin with the following lemma.
%Given an undirected connected graph $G,$
%The
%the
%following lemma shows that
To activate all vertices
of an undirected connected graph $G,$
the following lemma
shows that it suffices to
%picks as seeds
%we
%only need to
%need only
%it suffices to
pick as seeds
%picking
%every vertex with degree greater than $1/\rho$
each
$v\in V$ with $d(v)>1/\rho$
%degree greater than $1/\rho$
together with
at least a $\rho$ fraction of
the vertices in $N(v)$.
%its neighbors.
%picking all
%the
%vertices with degrees greater than
%$1/\rho$
%together with at least a $\rho$ fraction of
%their
%the
%neighbors of each such vertex.
%The vertices with degrees greater than
%$1/\rho$ will remain active in round $1$
In proof,
we show that the seeds thus picked
remain
%to be
active forever.
Then we observe that
activating
%each
a
non-seed
%vertices
vertex
%with an active neighbor
%can
%will
%be activated
%after
%by
requires
the activation of only one of
% a single
%any of
its neighbors;
%is activated.
hence
%Finally,
%As $G$ is connected,
%all vertices will be active
%by
the connectedness of $G$ implies that all vertices will
be active.
%because $G$ is connected.
}%old description (beginning of the section)

\begin{lemma}\label{thebasis1forscalefreegraphs}
Let $G(V,E)$ be
%a strongly connected graph.
an undirected connected graph with diameter $\Delta,$
%and
$|\,V\,|\ge 2$
and $\{v\in V\mid d(v)>1/\rho\}\neq \emptyset$.
Then
\begin{eqnarray}
\text{\rm min-seed}^{(\Delta)}\left(G,\rho\right)
\le
%O\left(
\sum_{v\in V, d(v)>1/\rho}\, \left(
\left\lceil\rho\,d(v)\right\rceil+1
%\right)
\right).
\label{theboundforthebasisforscalefreegraphs}
\end{eqnarray}
%$$\text{\sf Active}^{(\infty)}\left(
%\bigcup_{v\in V, d^\text{\rm in}(v)>1/\rho}\,
%N[v]
%\right)=V.$$
\end{lemma}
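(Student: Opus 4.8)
The plan is to construct an explicit seed set $S$ witnessing the bound and then verify that $S$ activates everything by round $\Delta$. Write $H=\{v\in V\mid d(v)>1/\rho\}$ for the set of high-degree vertices; by hypothesis $H\neq\emptyset$. For each $v\in H$ put $v$ itself into $S$ together with an arbitrary set of $\lceil\rho\,d(v)\rceil$ of its neighbors (this is possible since $\rho\le 1$ forces $\lceil\rho\,d(v)\rceil\le d(v)$), and let $S$ be the union of these choices over all $v\in H$. Since the cardinality of a union is at most the sum of the cardinalities, $|\,S\,|\le\sum_{v\in H}(\lceil\rho\,d(v)\rceil+1)$, which is exactly the right-hand side of (\ref{theboundforthebasisforscalefreegraphs}). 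It then remains to show $\text{\sf Active}^{(\Delta)}(S,G,\rho)=V$.

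The first key step is to show that $S$ is \emph{stable}, i.e., $S\subseteq\text{\sf Active}^{(1)}(S,G,\rho)$, which is condition (\ref{sayingtheseedsarestable}) of Fact~\ref{ifseedsarestablethentheprocessismonotone}. Each $v\in H$ has at least $\lceil\rho\,d(v)\rceil\ge\rho\,d(v)$ active neighbors in round $0$ (the ones we seeded), so $v$ remains active in round $1$. Each seeded neighbor $u$ lying outside $H$ satisfies $1\le d(u)\le 1/\rho$ (connectedness and $|\,V\,|\ge 2$ rule out isolated vertices), whence $\lceil\rho\,d(u)\rceil=1$; since $u$ was seeded as a neighbor of some active $v\in H$, it has the one active neighbor it needs and stays active in round $1$. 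Thus every seed is stable. A routine induction then upgrades stability to monotonicity of the synchronous cascade: if $\text{\sf Active}^{(i-1)}\subseteq\text{\sf Active}^{(i)}$, then any $w\in\text{\sf Active}^{(i)}$ had at least $\rho\,d(w)$ active neighbors in round $i-1$, hence at least that many in round $i$, so $w\in\text{\sf Active}^{(i+1)}$; combined with the base case $S\subseteq\text{\sf Active}^{(1)}$ this yields $\text{\sf Active}^{(i)}\subseteq\text{\sf Active}^{(i+1)}$ for all $i$.

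The second key step is a spreading estimate: writing $A_i=\text{\sf Active}^{(i)}(S,G,\rho)$, I would show $N[A_i]\subseteq A_{i+1}$. Indeed, a vertex $w\in N[A_i]$ is either already in $A_i$ (hence in $A_{i+1}$ by monotonicity) or has an active neighbor in $A_i$ while lying outside $A_i$; in the latter case $w\notin H$, because $H\subseteq S=A_0\subseteq A_i$, so $w$ is low-degree with threshold $\lceil\rho\,d(w)\rceil=1$, and its single active neighbor suffices to activate it in round $i+1$. Iterating gives $A_i\supseteq N^i[A_0]\supseteq N^i[\{v_0\}]$ for any fixed $v_0\in H$, and since $G$ has diameter $\Delta$ we obtain $A_\Delta\supseteq N^\Delta[\{v_0\}]=V$. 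Therefore $\text{\sf Active}^{(\Delta)}(S,G,\rho)=V$ and $\text{min-seed}^{(\Delta)}(G,\rho)\le|\,S\,|$, establishing (\ref{theboundforthebasisforscalefreegraphs}).

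I expect the only delicate points — places to argue carefully rather than genuine obstacles — to be the two facts that make the construction work: that every non-seed, non-high-degree vertex has threshold exactly $1$, so that a single active neighbor propagates activation, and that seeding all of $H$ is precisely what prevents the spreading step from stalling at a high-degree vertex. Once these are in place, the diameter bound on the number of rounds is immediate.
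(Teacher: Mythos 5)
Your proposal is correct and takes essentially the same route as the paper: the identical seed set (each high-degree $v$ together with $\lceil\rho\,d(v)\rceil$ of its neighbors), the same stability check showing every seed keeps at least $\lceil\rho\,d(u)\rceil$ active neighbors, and the same observation that every non-seed has threshold $1$ so activation spreads one graph-distance per round, reaching all of $N^\Delta[S]=V$. The only cosmetic difference is that you split the argument into a monotonicity induction plus a separate spreading estimate $N[A_i]\subseteq A_{i+1}$, whereas the paper packages both into the single inductive predicate $N^i[S]\subseteq\bigcap_{k\ge i}\text{\sf Active}^{(k)}(S,G,\rho)$.
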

\begin{proof}
%We prove
For each $v\in V,$
%with $d(v)>1/\rho,$
pick
%any
a
set $B(v)\subseteq N(v)$ of size
$\lceil\rho\,d(v)\rceil$.
%Fix any $x^*\in V$ and
Let
\begin{eqnarray}
X
=
%\left\{
%\begin{array}{ll}
\left\{v\in V\mid d(v)>\frac{1}{\rho}\right\}.
%,
%& \text{if }\left\{v\in V\mid
%d(v)>\frac{1}{\rho}\right\}\neq \emptyset;\\
%\left\{x^*\right\}, &\text{otherwise}.
%\end{array}
%\right.
\label{thecoreofseeds}
\end{eqnarray}
Then
take
%Let
%any
$$
S=
%u\in
\bigcup_{v\in X}\,
%\bigcup_{v\in V, d(v)>1/\rho}\,
\left(B(v)\cup\{v\}\right)$$
as the set of seeds.
Clearly, $S\neq \emptyset$.
As the righthand side of Eq.~(\ref{theboundforthebasisforscalefreegraphs})
is an upper bound on $|\,S\,|,$
it suffices to
%show that all vertices will be active
% at the end.
%in round $\Delta$.
establish $\text{\sf Active}^{(\Delta)}(S,G,\rho)=V$.

For any $u\in S,$
%one of the following cases holds:
%$|\,N^\text{in}(u)\cap S\,|\ge \lceil\rho\,d^\text{in}(v)\rceil$
we have
\begin{eqnarray}
\left|\,N(u)\cap S\,\right|
\ge \lceil\rho\,d(u)\rceil
\label{thetakenseedsarestable}
\end{eqnarray}
%holds
by the following arguments:
%with $d^\text{\rm in}(u)>1/\rho,$
\begin{itemize}
\item If
%$d(u)>1/\rho,$
$u\in X,$
then $B(u)\subseteq S,$
implying Eq.~(\ref{thetakenseedsarestable}).
% by the choice of $B(u)$.
\item Otherwise, $u\in B(v)$ for some
%$v\in V$ with $d(v)>1/\rho$.
$v\in X$.
%Consequently,
Therefore,
$v\in N(u)\cap S,$ implying
%.
%Again,
%Now
%the lefthand side of
Eq.~(\ref{thetakenseedsarestable})
% is
%at least $1$ whereas the righthand side equals $1$.
%holds because
because
%$\rho\, d(u)\le 1$.
$d(u)\le 1/\rho$.
\end{itemize}
By Eq.~(\ref{thetakenseedsarestable}),
if
%the
all
vertices in $S$
% remain to be active in round $1$
are active in a round, then they will remain to be active in the
next round.
%no vertices in $S$ will ever be deactivated, i.e.,
Therefore,
as the vertices in $S$ are active in round zero,
\begin{eqnarray}
S\subseteq \bigcap_{k\ge 0}\,\text{\sf Active}^{(k)}\left(S,G,\rho\right).
%S\subseteq \text{\sf Active}^{(1)}\left(S,G,\rho\right).
\label{thebasicseedsarestable}
\end{eqnarray}
%$k\ge 0$.
For each $i\ge 0,$
%write
denote by
$P(i)$
% be
%for
the relation
%consider the condition
%This implies
\begin{eqnarray}
%P(i):\,
%\begin{array}[ll]
%\text{\sc true},
N^i[S]\subseteq \bigcap_{k\ge i}\, \text{\sf
Active}^{(k)}\left(S,G,\rho\right).
%;\\
%\end{array}
\label{thesteppingofactivation}
\end{eqnarray}
%which says that all vertices
%reachable from a vertex in $S$
%with distances less than or equal to $i$ from some vertex in $S$
%are active starting from round $i$.
% to forever.
By construction, every $w\in V\setminus S$
%has a degree less than or equal to $1/\rho$
satisfies $d(w)\le 1/\rho$
and thus
\begin{eqnarray}
%\lceil\rho\,d(w)\rceil=1.
\left\lceil\rho\,d(w)\right\rceil=1.
\label{nonseedscanbeactivatedbyjustoneactiveneighbor}
\end{eqnarray}
%i.e., $w$ is activated in a round if it neighbors an active vertex
%in the previous round.
%Therefore,
%Hence
For $i\in\mathbb{N},$
%by the
% activation
%cascading
%process,
%definition of the synchronous reversible cascade,
$P(i)$ implies $P(i+1)$
%for $i\in\mathbb{N}$
by the following arguments:
\begin{itemize}
\item Every $w\in N^{i+1}[S]\setminus N^i[S]$
%trivially satisfies $w\in V\setminus S$
%and
has a neighbor in $N^i[S]$.
So by $P(i),$
$w$ has at least one active neighbor in rounds $i,i+1,\ldots$
Hence by Eq.~(\ref{nonseedscanbeactivatedbyjustoneactiveneighbor})
and the definition of the synchronous reversible cascade,
$w$ is active in rounds $i+1,i+2,\ldots$
%So
%for $k\ge i+1,$
%$w$
%will be active
%is activated
%in a round
% $k$
%if it has an active neighbor in
%the previous
%round.
% $k-1$.
% and
%has a neighbor in $N^i[S],$ which is active rounds $$.
%By $P(i),$
%$w$ has
%at least one
%an
%active neighbor
%starting from
%round $i$.
%in rounds $i, i+1,\ldots,\infty$
\item By $P(i),$
all vertices in $N^i[S]$ are active in rounds $i+1,i+2,\ldots$
\end{itemize}
% (Any vertex in $N^{i+1}[S]\setminus N^i[S]$
%belongs to $V\setminus S$ and will be activated by its neighbor
%in $N^i[S]$)
As Eq.~(\ref{thebasicseedsarestable})
%establishes
is precisely
$P(0),$
$P(i)$ holds for all $i\in\mathbb{N}$ by mathematical induction.
Finally, $P(\Delta)$
%and
%the connectedness of $G$
%show
%says
%that
%all vertices
%will be
%are
%active
% within rounds.
%in round $\Delta$.
gives $\text{\sf Active}^{(\Delta)}(S,G,\rho)=V$.
\comment{%written as above
By construction, every $w\in V\setminus S$ has degree less than
or equal to $1/\rho$ and thus $\lceil\rho\,d(w)\rceil=1$.
%\comment{%maybe separate this to another lemma
%Consequently,
Therefore,
in the first round,
%starting with $S$
each vertex
%the vertices
%in $N(S)\setminus S$
with distance $1$ to a vertex in $S$
% having one neighbor in $S,$
will
% have
be
%activated.
active.
%Suppose that
%, in the $i$-th round,
%each vertex with distance $1$ to a vertex in $S$
%}%maybe separate this to another lemma
}%written as above
\end{proof}

The following lemma
reduces the
%The
bound of Lemma~\ref{thebasis1forscalefreegraphs}
%can be reduced
to $2$
if all degrees are less than or equal to $\rho$.

\begin{lemma}\label{thebasis2forscalefreegraphs}
Let $G(V,E)$ be
an undirected connected graph with diameter $\Delta,$
$|\,V\,|\ge 2$
and $\{v\in V\mid d(v)>1/\rho\}=\emptyset$.
Then
\begin{eqnarray}
\text{\rm min-seed}^{(\Delta)}\left(G,\rho\right)
\le
2.
\label{theboundis2wheneveryoneiseasytotrigger}
\end{eqnarray}
\end{lemma}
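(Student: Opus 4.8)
The plan is to exhibit an explicit two-element seed set and show it activates all of $V$ by round $\Delta$, reusing the propagation argument already established in the proof of Lemma~\ref{thebasis1forscalefreegraphs}. The crucial observation is that the hypothesis $d(v)\le 1/\rho$ for every $v$, together with connectedness and $|\,V\,|\ge 2$ (which forces $d(v)\ge 1$), yields $0<\rho\,d(v)\le 1$ and hence $\lceil\rho\,d(v)\rceil=1$ for all $v\in V$. Thus every vertex becomes active as soon as a single one of its neighbors is active, and a vertex stays active as long as it retains at least one active neighbor.

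First I would note why one seed does not suffice: a lone active vertex $v$ has no active neighbor in round $0$, so it is deactivated in round $1$, producing an oscillating wavefront rather than a stable core. To prevent this I would choose an edge $\{v_1,v_2\}$ of $G$, which exists because $G$ is connected with $|\,V\,|\ge 2$, and set $S=\{v_1,v_2\}$. Since each of $v_1,v_2$ is a neighbor of the other, and since one active neighbor already meets the threshold (because $\lceil\rho\,d(v_i)\rceil=1$), the set $S$ is stable, i.e.\ $S\subseteq\text{\sf Active}^{(1)}(S,G,\rho)$, and therefore $S\subseteq\bigcap_{k\ge 0}\text{\sf Active}^{(k)}(S,G,\rho)$ by induction on the round number, exactly as in Eq.~(\ref{thebasicseedsarestable}).

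With this stable core in hand, I would run verbatim the induction $P(i)$ from Lemma~\ref{thebasis1forscalefreegraphs}: assuming all of $N^i[S]$ is active from round $i$ onward, each vertex in $N^{i+1}[S]\setminus N^i[S]$ has a neighbor in $N^i[S]$, hence at least one active neighbor in every round $\ge i$, and so is active from round $i+1$ onward using $\lceil\rho\,d(w)\rceil=1$, while the vertices of $N^i[S]$ remain active by the induction hypothesis. Because $\Delta$ is the diameter, every vertex lies within distance $\Delta$ of $v_1\in S$, so $N^\Delta[S]=V$; then $P(\Delta)$ gives $\text{\sf Active}^{(\Delta)}(S,G,\rho)=V$ and hence $\text{min-seed}^{(\Delta)}(G,\rho)\le|\,S\,|=2$.

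I do not anticipate a genuine obstacle here. The single real subtlety is the stability of the seed set, which is precisely the reason two mutually adjacent seeds are required in place of one; everything afterward is a direct replay of the spreading induction already proved for Lemma~\ref{thebasis1forscalefreegraphs}.
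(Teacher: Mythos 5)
Your proof is correct and is essentially the paper's own argument: the paper instantiates the construction of Lemma~\ref{thebasis1forscalefreegraphs} with $X$ a single vertex, so its seed set $B(v)\cup\{v\}$ is exactly a pair of adjacent vertices, and the stability and spreading induction are reused word for word. Your version just spells out explicitly what that specialization amounts to.
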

\begin{proof}
Modify the proof of Lemma~\ref{thebasis1forscalefreegraphs}
by taking
%In Eq.~(\ref{thecoreofseeds}),
%take
$X$ to contain a single vertex in Eq.~(\ref{thecoreofseeds})
and replacing
the
%occurrence of
reference to
``Eq.~(\ref{theboundforthebasisforscalefreegraphs})''
by ``Eq.~(\ref{theboundis2wheneveryoneiseasytotrigger}).''
The
rest of the proof
%is the same
%the proof
follows,
%is the same as that of Lemma~\ref{thebasis1forscalefreegraphs},
word for word.
\end{proof}

\begin{lemma}\label{thebasisforscalefreegraphs}
Let $G(V,E)$ be
an undirected connected graph with diameter $\Delta$ and
$|\,V\,|\ge 2$.
Then
$$\text{\rm min-seed}^{(\Delta)}\left(G,\rho\right)
\le
2
+
\sum_{v\in V, d(v)>1/\rho}\, \left(
\left\lceil\rho\,d(v)\right\rceil+1
\right).
$$
\end{lemma}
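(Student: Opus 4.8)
The plan is to derive Lemma~\ref{thebasisforscalefreegraphs} immediately from the two preceding lemmas by a simple case distinction on whether the set $\{v\in V\mid d(v)>1/\rho\}$ is empty. This unified statement is precisely the common upper bound that covers both cases without a side hypothesis on that set, so the natural strategy is to observe that one of Lemma~\ref{thebasis1forscalefreegraphs} or Lemma~\ref{thebasis2forscalefreegraphs} always applies, and that the claimed bound dominates both.

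First I would set $X=\{v\in V\mid d(v)>1/\rho\}$ and split into two cases. In the case $X\neq\emptyset$, Lemma~\ref{thebasis1forscalefreegraphs} applies directly and yields
\[
\text{min-seed}^{(\Delta)}(G,\rho)\le \sum_{v\in V,\,d(v)>1/\rho}\left(\lceil\rho\,d(v)\rceil+1\right),
\]
which is bounded above by the claimed expression since the extra additive $2$ on the right-hand side of the lemma only makes the bound larger. In the case $X=\emptyset$, Lemma~\ref{thebasis2forscalefreegraphs} gives $\text{min-seed}^{(\Delta)}(G,\rho)\le 2$; here the summation in the target bound is empty (hence $0$), so the right-hand side equals exactly $2$ and the inequality again holds. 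Combining the two cases finishes the proof.

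There is essentially no obstacle here: the work has already been done in Lemmas~\ref{thebasis1forscalefreegraphs} and~\ref{thebasis2forscalefreegraphs}, and Lemma~\ref{thebasisforscalefreegraphs} merely packages them into a single clean inequality that holds unconditionally (given only that $G$ is connected and $|V|\ge 2$). The only point requiring a moment's care is checking that the $+2$ term is not wasteful in a way that breaks anything — but since we only need an upper bound, the slack is harmless, and it is exactly what allows the two cases to be merged into one formula. I would therefore expect the actual proof to be just a few lines invoking the two lemmas and noting the empty-sum convention.
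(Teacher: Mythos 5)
Your proof is correct and matches the paper's, which simply states that the lemma is immediate from Lemmas~\ref{thebasis1forscalefreegraphs} and~\ref{thebasis2forscalefreegraphs}; your case split on whether $\{v\in V\mid d(v)>1/\rho\}$ is empty is exactly the intended argument, with the additive $2$ absorbing the empty case.
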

\begin{proof}
Immediate from
Lemmas~\ref{thebasis1forscalefreegraphs}--\ref{thebasis2forscalefreegraphs}.
\end{proof}

\begin{corollary}\label{thebasisforscalefreegraphsthemonotoneversion}
Let $G(V,E)$ be
an undirected connected graph with diameter $\Delta$ and
$|\,V\,|\ge 2$.
Then
%there exists a
%set
%$S\subseteq V$
% such that any asynchronous
%process with
%with
%\begin{eqnarray*}
%V&=&\text{\sf Active}^\text{\rm async}\left(S,G,\rho\right),\\
%|\,S\,|&\le&
%\sum_{v\in V, d(v)>1/\rho}\, \left(
%\left\lceil\rho\,d(v)\right\rceil+1
%\right).
%\end{eqnarray*}
$$\text{\rm min-seed}^{\text{\rm async}}_\text{\rm monotone}\left(G,\rho\right)
\le
2
+
\sum_{v\in V, d(v)>1/\rho}\, \left(
\left\lceil\rho\,d(v)\right\rceil+1
\right).$$
%$$\text{\rm min-seed}^{\text{\rm async}}\left(G,\rho\right)
%\le
%\sum_{v\in V, d(v)>1/\rho}\, \left(
%\left\lceil\rho\,d(v)\right\rceil+1
%\right).$$
%Furthermore,
%there exists a set of $\sum_{v\in V, d(v)>1/\rho}\,
%(\lceil\rho\,d(v)\rceil+1)$ seeds
%activating all vertices
%such that
%no active vertices can ever be deactivated
%in any asynchronous
%%cascading
%%process
%reversible cascade
%%of activation
%with
%%seed set
%%$S$
%% as the set of seeds
%%and
%threshold
%$\rho$.
%% the threshold.
\end{corollary}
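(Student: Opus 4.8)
The plan is to apply Fact~\ref{ifseedsarestablethentheprocessismonotone} to the very seed set $S$ that was constructed in the proofs of Lemmas~\ref{thebasis1forscalefreegraphs}--\ref{thebasis2forscalefreegraphs}. Those proofs already carry out all of the combinatorial work; what remains is to observe that the two hypotheses of Fact~\ref{ifseedsarestablethentheprocessismonotone} hold for $S$ and then to read off the size bound. First I would recall the seed set explicitly. When $\{v\in V\mid d(v)>1/\rho\}\neq\emptyset$, let $S$ be the set built in Lemma~\ref{thebasis1forscalefreegraphs}; when that set is empty, let $S$ be the set built in Lemma~\ref{thebasis2forscalefreegraphs}. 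In either case the proof established two properties of $S$: the stability estimate of Eq.~(\ref{thetakenseedsarestable}), which yields $S\subseteq\bigcap_{k\ge 0}\text{\sf Active}^{(k)}(S,G,\rho)$ as recorded in Eq.~(\ref{thebasicseedsarestable}), and the activation conclusion $\text{\sf Active}^{(\Delta)}(S,G,\rho)=V$.

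The key observation is that these two properties are precisely the hypotheses of Fact~\ref{ifseedsarestablethentheprocessismonotone} with $t=\Delta$. Indeed, taking $k=1$ in Eq.~(\ref{thebasicseedsarestable}) gives $S\subseteq\text{\sf Active}^{(1)}(S,G,\rho)$, which is exactly Eq.~(\ref{sayingtheseedsarestable}); and $V=\text{\sf Active}^{(\Delta)}(S,G,\rho)$ supplies the remaining requirement $V=\text{\sf Active}^{(t)}(S,G,\rho)$. Applying Fact~\ref{ifseedsarestablethentheprocessismonotone} then yields $\text{\rm min-seed}^{\text{\rm async}}_{\text{\rm monotone}}(G,\rho)\le|S|$. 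Since $|S|$ is bounded above by the right-hand side of Lemma~\ref{thebasisforscalefreegraphs} in both cases (it is at most $\sum_{v:\,d(v)>1/\rho}(\lceil\rho\,d(v)\rceil+1)$ in the nonempty case and at most $2$ in the empty case), the claimed inequality follows at once.

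I do not expect a genuine obstacle here, since the statement is essentially a monotone-asynchronous reinterpretation of Lemma~\ref{thebasisforscalefreegraphs}. The only point requiring care is that the argument must cover both the empty and nonempty cases of $\{v\mid d(v)>1/\rho\}$ uniformly: in each case one must confirm that the specific $S$ produced by the relevant lemma satisfies the stability condition Eq.~(\ref{sayingtheseedsarestable}), and not merely that it activates all vertices by round $\Delta$. Fortunately the stability of $S$ is already the core of those proofs, being exactly the content of Eq.~(\ref{thetakenseedsarestable}), so no new construction is needed; one simply substitutes the conclusion of Fact~\ref{ifseedsarestablethentheprocessismonotone} for the synchronous conclusion drawn at the end of Lemma~\ref{thebasisforscalefreegraphs}.
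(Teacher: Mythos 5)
Your proposal is correct and follows essentially the same route as the paper: the paper's proof likewise notes that Eq.~(\ref{thetakenseedsarestable}) from the proofs of Lemmas~\ref{thebasis1forscalefreegraphs}--\ref{thebasis2forscalefreegraphs} yields the stability condition Eq.~(\ref{sayingtheseedsarestable}), and then invokes Fact~\ref{ifseedsarestablethentheprocessismonotone} together with those proofs to obtain the bound. Your write-up is just a more explicit version of the same argument, including the case split.
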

\begin{proof}
%In the proofs of
%Lemmas~\ref{thebasis1forscalefreegraphs}--\ref{thebasis2forscalefreegraphs},
%Eq.~(\ref{thetakenseedsarestable})
%holds in our proofs
%of
%Lemmas~\ref{thebasis1forscalefreegraphs}--\ref{thebasis2forscalefreegraphs}.
%As
Eq.~(\ref{thetakenseedsarestable})
holds in our proofs of
Lemmas~\ref{thebasis1forscalefreegraphs}--\ref{thebasis2forscalefreegraphs},
and
it
implies
Eq.~(\ref{sayingtheseedsarestable}).
Hence
the corollary follows from
%the proof of Lemma~\ref{thebasisforscalefreegraphs}
%and
%Facts~\ref{ifseedsarestablethentheprocessismonotone}--\ref{ifseedsarestablethensynchronizationinvariant}.
Fact~\ref{ifseedsarestablethentheprocessismonotone}
and
%the proof of Lemma~\ref{thebasisforscalefreegraphs}.
the proofs of
Lemmas~\ref{thebasis1forscalefreegraphs}--\ref{thebasis2forscalefreegraphs}.
\end{proof}

\comment{%need to say it's folklore if seeds remain active?
It is folklore that if
%no
%all
the
initial seeds
remain active
%in every round,
forever,
%are ever deactivated,
then no vertices will ever be deactivated.
To see this in the case of Lemma~\ref{thebasisforscalefreegraphs},
}%need to say it's folklore if seeds remain active?
%By the activation process,

\comment{%said elsewhere of the monotonicity
We remark on the proof of Lemma~\ref{thebasisforscalefreegraphs}.
As a vertex can be activated only by its neighbors,
%observe that
\begin{eqnarray}
\text{\sf Active}^{(i)}\left(S,G,\rho\right)\subseteq N^i[S].
\label{nofarawayverticesshouldbeactivatedsoon}
\end{eqnarray}
%because
%the vertices in
%$V\setminus N^i[S]$
%have distances
%contains all vertices with distances no more
%greater than $i$ to every seed.
%the
%activation can propagate
%vertex in $S$.
%This and
By Eq.~(\ref{thesteppingofactivation})
%shows that
Eq.~(\ref{nofarawayverticesshouldbeactivatedsoon})
holds containment replaced by equality.
Consequently,
%the process of activation
the seeds form a monotone dynamic monopoly
%the choice of seeds make the process monotone
in the sense that
no active vertices will ever be deactivated~\cite{Pel98, FKRRS03, FLLPS04,
Ber01}.
}%said elsewhere of the monotonicity

%The
%proof of the
%following
%next
%lemma
%chooses a random set of seeds that activate all vertices in one round.
%Then
%picks seeds in two stages.
%To show $\text{min-seed}^{(1)}(G,\rho),$
%it suffices to
%we only need to find a randomized procedure
Next, we analyze
%We now analyze
%analyzes
the following picking of seeds in a
%digraph
directed graph
$G(V,E)$:
%analyzes
First,
% is similar to
%chooses
choose each vertex as a seed independently with some probability.
%a random set of
%seeds
%randomly
%and then
Second,
%take
choose
%takes
the vertices
that cannot be activated in round $1$
also as seeds.
%The
%set of
%chosen seeds
%will activate
%all vertices in round $1$.
%The expected number of chosen
%uses a simple
%technique
% that appeared
%in \cite[Theorem~9]{CL09}.
The
% chosen seeds will activate all vertices in one round.
%Then
%Furthermore, the
expected number of chosen seeds
serves as an upper bound on $\text{min-seed}^{(1)}(G,\rho)$.
%A similar technique
%is used
%by
Chang and Lyuu~\cite[Theorem~9]{CL09}
use a similar technique
to
%derives
%derive
%an $(23/27)\,|\,V\,|$
%upper bounds on irreversible dynamic monopolies.
analyze irreversible cascades.
Their
results
%bounds
are
%further
improved by Ackerman, Ben-Zwi
and Wolfovitz~\cite[Sec.~3]{ABW10}.

\begin{lemma}
For
a
% simple
%digraph
directed graph
$G(V,E),$
$\rho\in (\,0,1\,]$
and
$C>1,$
%and $\rho\in (\,0,1/(8C)\,),$
\begin{eqnarray}
\text{\rm min-seed}^{(1)}\left(G,\rho\right)
\le O\left(C\rho\,|\,V\,|\right)
+\sum_{v\in V}\, \left(d^\text{\rm in}(v)+1\right)
\exp{\left(-3C\rho\, d^\text{\rm in}(v)\right)}.
\label{aneasyboundthatbecomessogoodforERgraphs}
\end{eqnarray}
\end{lemma}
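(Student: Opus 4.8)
The plan is to use the probabilistic method with a two-stage seed selection, exactly as the lemma's preamble describes. Fix a parameter $q\in(\,0,1\,]$ to be chosen later (I expect $q=\Theta(C\rho)$). First I would include each vertex of $V$ in an initial seed set $S_0$ independently with probability $q$. Call a vertex $v$ \emph{bad} if, after this first stage, fewer than a $\rho$ fraction of its in-neighbors lie in $S_0$; equivalently, $|N^{\text{in}}(v)\cap S_0|<\rho\,d^{\text{in}}(v)$. In the second stage I add every bad vertex to the seed set, forming $S=S_0\cup\{\text{bad }v\}$. By construction, once every bad vertex is itself a seed and every non-bad vertex already has a $\rho$ fraction of its in-neighbors active in round zero, every vertex with positive indegree is activated in round $1$; vertices of indegree zero must also be handled, but a vertex with $d^{\text{in}}(v)=0$ is always bad under the natural reading (it has no in-neighbors meeting the threshold) and so is seeded directly. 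Hence $\text{\sf Active}^{(1)}(S,G,\rho)=V$ deterministically, which gives $\text{min-seed}^{(1)}(G,\rho)\le E[\,|S|\,]$.

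Next I would bound $E[\,|S|\,]$ by linearity of expectation as
\begin{eqnarray*}
E[\,|S|\,]
\le
E[\,|S_0|\,]
+\sum_{v\in V}\Pr[\,v\text{ is bad}\,]
=
q\,|\,V\,|
+\sum_{v\in V}\Pr\!\left[\,|N^{\text{in}}(v)\cap S_0|<\rho\,d^{\text{in}}(v)\,\right].
\end{eqnarray*}
The first term is $O(C\rho\,|\,V\,|)$ once $q=\Theta(C\rho)$, matching the leading term on the right-hand side of Eq.~(\ref{aneasyboundthatbecomessogoodforERgraphs}). For the tail term, observe that $|N^{\text{in}}(v)\cap S_0|$ is a sum of $d^{\text{in}}(v)$ independent Bernoulli$(q)$ random variables, with mean $q\,d^{\text{in}}(v)$. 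Choosing $q$ so that $\rho\,d^{\text{in}}(v)=(1-\delta)\,q\,d^{\text{in}}(v)$ with $\delta$ bounded away from $0$ and $1$ (concretely $q=2C\rho$ with $C>1$ gives $\rho=q/(2C)$ and a fixed gap), I can apply the Chernoff lower-tail bound of Fact~\ref{Chernofflowertail} to get
\begin{eqnarray*}
\Pr\!\left[\,|N^{\text{in}}(v)\cap S_0|<\rho\,d^{\text{in}}(v)\,\right]
\le
\exp\!\left(-\frac{\delta^2\,q\,d^{\text{in}}(v)}{2}\right).
\end{eqnarray*}

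The remaining work is purely arithmetic: choose the constant hidden in $q=\Theta(C\rho)$ and the corresponding $\delta$ so that $\delta^2 q/2\ge 3C\rho$, which forces the exponent to be at most $-3C\rho\,d^{\text{in}}(v)$ and reproduces the factor $\exp(-3C\rho\,d^{\text{in}}(v))$ in the stated bound. The extra polynomial prefactor $d^{\text{in}}(v)+1$ in Eq.~(\ref{aneasyboundthatbecomessogoodforERgraphs}) is more generous than the bare probability I obtain, so the per-vertex tail contribution $\exp(-3C\rho\,d^{\text{in}}(v))$ is dominated by $(d^{\text{in}}(v)+1)\exp(-3C\rho\,d^{\text{in}}(v))$ and the summed bound follows immediately. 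The main obstacle I anticipate is purely bookkeeping around the constants: I must verify that a single choice of $q$ (depending only on $C$ and $\rho$) simultaneously makes $\delta\in(0,1)$ valid for the Chernoff bound, keeps $q\le 1$ so that it is a legitimate probability, and yields an exponent dominating $3C\rho\,d^{\text{in}}(v)$ uniformly over all $v$; the degree-zero and small-degree vertices (where $\rho\,d^{\text{in}}(v)$ may round awkwardly against the integer threshold $\lceil\rho\,d^{\text{in}}(v)\rceil$) need a brief separate check, since there the ``bad'' probability may be $1$ but the prefactor $d^{\text{in}}(v)+1$ absorbs them into the error term anyway.
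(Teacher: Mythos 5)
There is a genuine gap in your second stage. In the reversible cascade, whether a vertex $v$ with positive indegree is active in round $1$ is determined \emph{solely} by how many of its in-neighbors are active in round $0$; the round-$0$ state of $v$ itself is irrelevant. So adding a bad vertex $v$ to the seed set does nothing to activate $v$ in round $1$: it is active in round $0$ as a seed, and then deactivated in round $1$ because fewer than a $\rho$ fraction of its in-neighbors lie in the seed set (unless, by luck, enough of its in-neighbors also happen to be bad, which is not guaranteed). Your claim that ``once every bad vertex is itself a seed \dots every vertex with positive indegree is activated in round $1$'' is therefore false, and the bound $E[\,|S|\,]\le q\,|V|+\sum_v\Pr[v\text{ bad}]$ does not certify a valid $1$-round seed set.

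The paper's fix is exactly the piece you treated as slack: for the set $A$ of bad vertices, it seeds $S\cup N^{\text{in}}(A)\cup A$, i.e., each bad vertex \emph{together with all of its in-neighbors}. Then every bad vertex has all of its in-neighbors active in round $0$ and is activated in round $1$, while the non-bad vertices are activated by $S$ as in your argument. This costs $\sum_{v\in A}\bigl(d^{\text{in}}(v)+1\bigr)$ rather than $|A|$, which is precisely where the prefactor $d^{\text{in}}(v)+1$ in Eq.~(\ref{aneasyboundthatbecomessogoodforERgraphs}) comes from --- it is not a generous overestimate of the bare probability but the actual per-bad-vertex seeding cost. The rest of your argument (sampling with probability $\Theta(C\rho)$, the Chernoff lower tail from Fact~\ref{Chernofflowertail} giving $\Pr[v\in A]\le\exp(-3C\rho\,d^{\text{in}}(v))$, linearity of expectation, and the existence of a realization at most the mean) matches the paper and goes through once the second stage is corrected.
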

\begin{proof}
Assume $\rho<1/(8C)$ for, otherwise,
Eq.~(\ref{aneasyboundthatbecomessogoodforERgraphs})
holds trivially.
Let $S\subseteq V$ contain
%Pick
each vertex
%as a seed
independently with probability
$8C\rho$
and
$$A=\left\{v\in V\mid \left|\,N^\text{in}(v)\cap S\,\right|
%<
\le
\rho\,d^\text{in}(v)\right\}.$$
By construction,
$V\setminus A\subseteq\text{\sf Active}^{(1)}(S,G,\rho)$.
Clearly,
%Furthermore,
$A\subseteq \text{\sf Active}^{(1)}(N^\text{in}(A)\cup A, G, \rho)$.
%because a vertex with all neighbors being seeds
Consequently,
\begin{eqnarray}
\text{\sf Active}^{(1)}\left(S\cup N^\text{in}(A)\cup A, G, \rho\right)=V.
\label{thiskindofseedingsuffices}
\end{eqnarray}

Clearly,
%$$N(A)\cup A=\bigcup_{v\in A}\, N[\,v\,]$$
\begin{eqnarray}
E\left[\,|\,S\,|\,\right]&=&8C\rho\, |\,V\,|,\label{thefirstseeds}\\
\left|\,N^\text{in}(A)\cup A\,\right|
%&=&\left|\,\bigcup_{v\in A}\, N^{\text{in}}[\,v\,]\,\right|
%\le
&\le&
\sum_{v\in A}\, \left(d^\text{in}(v)+1\right).
\label{takeneighborhoodofeachinactivevertex}
\end{eqnarray}
By Chernoff's bound (Fact~\ref{Chernofflowertail}),
\begin{eqnarray}
\Pr\left[\,
%\left|\,N^\text{in}(v)\cap S\,\right|\le 4C\rho\,d^\text{in}(v)
v\in A
\,\right]
\le \exp{\left(-3C\rho\,d^\text{in}(v)\right)}
\label{probabilityofnotactivatedin1round}
\end{eqnarray}
for all $v\in V$.
%Hence by the linearity of expectation,
%\begin{eqnarray*}
%E\left[\,\left|\,
%A
%\,\right|\,\right]
%=\sum_{v\in V}\,\Pr\left[\,v\in A\,\right]
%\le
%\sum_{v\in V}\,\exp{\left(-C\rho\,d^\text{in}(v)\right)}.
%\end{eqnarray*}
%Consequently,
By
Eqs.~(\ref{takeneighborhoodofeachinactivevertex})--(\ref{probabilityofnotactivatedin1round})
and the linearity of expectation,
$$E\left[\,\left|\,N^\text{in}(A)\cup A\,\right|\,\right]
\le \sum_{v\in V}\,\left(d^\text{in}(v)+1\right)\,
\exp{\left(-3C\rho\,d^\text{in}(v)\right)}.$$
This and Eqs.~(\ref{thiskindofseedingsuffices})--(\ref{thefirstseeds})
complete the proof because there must exist a realization
of $S\cup N^\text{in}(A)\cup A$ with size less than or equal to
its expected value.
% because there must exist a realization
%of $A$ with $|\,S\cup N^\text{in}(A)\cup A\,|\le
%E[\,\,]$.
\end{proof}

\begin{lemma}\label{justdontknowhowtonamethislemma}
For
a
% simple
%digraph
directed graph
$G(V,E),$
$\rho\in (\,0,1\,]$
and
$C>1,$
%{\small%equation very long!
$$\text{\rm min-seed}^{(1)}\left(G,\rho\right)
\le O\left(C\rho\,|\,V\,|\right)
+
%\sum_{v\in V}\,
\sum_{v\in V, d^\text{\rm in}(v)< (1/(C\rho))\ln(e/\rho)}\,
%\chi\left[d^\text{\rm in}(v)<\frac{1}{C\rho}\ln{\frac{e}{\rho}}\right]\,
\left(d^\text{\rm in}(v)+1\right)
\exp{\left(-3C\rho\, d^\text{\rm in}(v)\right)}.$$
%}%equation very long!
\end{lemma}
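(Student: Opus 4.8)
The plan is to derive this sharper bound from the previous lemma by truncating its sum. Starting from Eq.~(\ref{aneasyboundthatbecomessogoodforERgraphs}), I would write $T\equiv(1/(C\rho))\ln(e/\rho)$ and split $\sum_{v\in V}$ into the part over vertices with $d^{\text{in}}(v)<T$ --- which is exactly the sum that survives in the statement --- and the tail over vertices with $d^{\text{in}}(v)\ge T$. It then suffices to show that this tail is $O(C\rho\,|\,V\,|)$, so that it can be folded into the first term.

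To control the tail, set $f(d)\equiv(d+1)\exp(-3C\rho d)$. Differentiating gives $f'(d)=\exp(-3C\rho d)\,(1-3C\rho(d+1))$, so $f$ is decreasing once $d>1/(3C\rho)-1$. Since $\ln(e/\rho)\ge1$ forces $T\ge1/(C\rho)>1/(3C\rho)-1$, the function $f$ is decreasing on $[\,T,\infty\,)$, whence $f(d^{\text{in}}(v))\le f(T)$ for every vertex counted in the tail. As there are at most $|\,V\,|$ such vertices, the tail is at most $|\,V\,|\,f(T)$.

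It remains to estimate $f(T)$. Because $3C\rho T=3\ln(e/\rho)$, one has $\exp(-3C\rho T)=(\rho/e)^3$, and a direct computation yields $f(T)=\rho^2\ln(e/\rho)/(Ce^3)+\rho^3/e^3$. The crux is the elementary inequality $\rho\ln(e/\rho)\le1$ for all $\rho\in(\,0,1\,]$, which follows because $\rho-\rho\ln\rho$ increases on $(\,0,1\,]$ to the value $1$ at $\rho=1$. Combined with $C>1$ and $\rho\le1$, it gives $\rho^2\ln(e/\rho)/C\le\rho/C\le C\rho$ and $\rho^3\le\rho\le C\rho$, so $f(T)=O(C\rho)$ and hence $|\,V\,|\,f(T)=O(C\rho\,|\,V\,|)$. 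Substituting back absorbs the tail into the $O(C\rho\,|\,V\,|)$ term and proves the claim.

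The step I expect to be most delicate is this final calibration: the threshold $T$ is tuned so that the exponential contributes precisely the factor $\rho^3$ needed to cancel the $1/(C\rho)$ carried by $T+1$, with the cancellation made possible only by $\rho\ln(e/\rho)\le1$. A coarser threshold would leave a residual term that does not fit inside $O(C\rho\,|\,V\,|)$, so getting the constant in the threshold right is where the real content lies.
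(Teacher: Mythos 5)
Your proposal is correct and follows the same route as the paper: the paper also truncates the sum of the preceding lemma at the threshold $T=(1/(C\rho))\ln(e/\rho)$ and absorbs the tail by the estimate $\max_{x\ge T}(x+1)\exp(-3C\rho x)=O(\rho)$, which it attributes to ``elementary calculus.'' You have simply supplied that calculus explicitly (monotonicity of $f$ beyond $T$, the evaluation $f(T)=\rho^{2}\ln(e/\rho)/(Ce^{3})+\rho^{3}/e^{3}$, and the inequality $\rho\ln(e/\rho)\le 1$), and your conclusion matches the paper's.
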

\begin{proof}
By elementary calculus,
$$\max_{x\ge (1/(C\rho))\ln(e/\rho)}\,(x+1)\exp(-3C\rho x)=O(\rho).$$
%Hence
Therefore,
in the summation of
Eq.~(\ref{aneasyboundthatbecomessogoodforERgraphs}),
%the total contribution of
vertices with indegrees at least $(1/(C\rho))\ln(e/\rho)$
contribute $O(\rho\, |\,V\,|)$ in total.
%can be absorbed by the $O(C\rho\, |\,V\,|)$ term.
\end{proof}

\begin{lemma}\label{aboundinthenumberofsmalldegreevertices}
For
a
% simple
%digraph
directed graph
$G(V,E),$
$\rho\in (\,0,1\,]$
and
$C>1,$
$$\text{\rm min-seed}^{(1)}\left(G,\rho\right)
\le O\left(C\rho\,|\,V\,|\right)
+
\left|\,\left\{v\in V\mid d^\text{\rm in}(v)
<\frac{1}{C\rho}\ln{\frac{e}{\rho}}\right\}\,\right|
\cdot\left(\frac{1}{C\rho}\left(\ln{\frac{e}{\rho}}\right)+1\right).$$
\end{lemma}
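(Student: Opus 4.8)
The plan is to derive this estimate directly from Lemma~\ref{justdontknowhowtonamethislemma} by majorizing each term of its summation uniformly and then counting the terms. The starting point is the bound
$$\text{min-seed}^{(1)}(G,\rho)\le O(C\rho\,|\,V\,|)+\sum_{v\in V,\,d^{\text{in}}(v)<(1/(C\rho))\ln(e/\rho)}(d^{\text{in}}(v)+1)\exp(-3C\rho\,d^{\text{in}}(v)),$$
so it suffices to bound the summation by the product appearing in the claim. First I would note that the exponential factor never exceeds $1$: because $\rho\in(\,0,1\,]$, $C>1$ and $d^{\text{in}}(v)\ge 0$, the exponent $-3C\rho\,d^{\text{in}}(v)$ is nonpositive, whence $\exp(-3C\rho\,d^{\text{in}}(v))\le 1$ and each summand is at most $d^{\text{in}}(v)+1$.

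Next I would invoke the very condition defining the index set. Every vertex $v$ contributing to the sum satisfies $d^{\text{in}}(v)<(1/(C\rho))\ln(e/\rho)$, so $d^{\text{in}}(v)+1<(1/(C\rho))(\ln(e/\rho))+1$; note that $\ln(e/\rho)\ge 1>0$ since $\rho\le 1$, so this quantity is genuinely a positive bound. Combining this with the previous observation, each summand is at most $(1/(C\rho))(\ln(e/\rho))+1$, a value independent of $v$. Since the number of summands is exactly $|\{v\in V\mid d^{\text{in}}(v)<(1/(C\rho))\ln(e/\rho)\}|$, the whole sum is at most that cardinality times $(1/(C\rho))(\ln(e/\rho))+1$. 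Adding back the $O(C\rho\,|\,V\,|)$ term from Lemma~\ref{justdontknowhowtonamethislemma} then gives the stated inequality.

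I expect no real obstacle here: the argument is a crude term-by-term majorization followed by a count, relying only on $\rho\le 1$ to control the exponential factor and on the degree threshold to control the linear factor. The purpose of the lemma is simply to trade the sharper, degree-sensitive bound of Lemma~\ref{justdontknowhowtonamethislemma} for a coarser product form in which each summand is replaced by its uniform maximum over the index set; this is the form that the subsequent analysis of graphs with few low-degree vertices will find convenient to apply.
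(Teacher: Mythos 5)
Your proposal is correct and matches the paper's own proof, which likewise invokes Lemma~\ref{justdontknowhowtonamethislemma} and bounds each summand by noting $\exp(-3C\rho\,d^{\text{in}}(v))\le 1$ and using the degree threshold defining the index set. You have merely spelled out the term-by-term majorization and counting in more detail than the paper's one-line argument.
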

\begin{proof}
Invoke Lemma~\ref{justdontknowhowtonamethislemma}
and observe that
%$d^\text{in}(v)+1<3/(C\rho)\ln(e/\rho)+1$
$\exp{(-3C\rho\, d^\text{in}(v))}\le 1$ for all $v\in V$.
\end{proof}

%\begin{lemma}
%Let $G(V,E)$ be
%an undirected connected graph with $|\,V\,|\ge 2$.
%If, starting with $S\subseteq V$ as the set of seeds
%\end{lemma}

%An immediate corollary follows
%by
%applying
%inserting a suitable constant $C$ in
%Lemma~\ref{aboundinthenumberofsmalldegreevertices}
%with a suitable constant $C$.

\begin{corollary}\label{simpleconsequenceforregulargraphs}
Let
%For
%a
%directed graph
$G(V,E)$ be a directed graph
$\rho\in (\,0,1\,]$
and
$C>1$.
If every vertex of $G$ has indegree
$\Omega((1/\rho)\,\ln{(e/\rho)}),$
then
%such that
$$\text{\rm min-seed}^{(1)}\left(G,\rho\right)
= O\left(\rho\,|\,V\,|\right).$$
\end{corollary}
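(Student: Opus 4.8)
The plan is to obtain the corollary directly from Lemma~\ref{aboundinthenumberofsmalldegreevertices} by choosing the free parameter $C$ large enough that the summation (second) term in that bound vanishes entirely. The first step is to unpack the hypothesis: asserting that every vertex has indegree $\Omega((1/\rho)\ln(e/\rho))$ means there is a universal constant $c>0$, independent of the graph and of $\rho$, such that $d^\text{in}(v)\ge c\,(1/\rho)\ln(e/\rho)$ for every $v\in V$.

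Next I would fix a constant $C>1$ to feed into Lemma~\ref{aboundinthenumberofsmalldegreevertices}, taking $C$ large enough that $1/C\le c$; for instance $C=\max\{2,1/c\}$ works and satisfies the requirement $C>1$. With this choice the threshold $(1/(C\rho))\ln(e/\rho)$ appearing in the lemma obeys $(1/(C\rho))\ln(e/\rho)\le c\,(1/\rho)\ln(e/\rho)\le d^\text{in}(v)$ for every $v\in V$. Consequently the set $\{v\in V\mid d^\text{in}(v)<(1/(C\rho))\ln(e/\rho)\}$ is empty, so the entire second summand in the bound of Lemma~\ref{aboundinthenumberofsmalldegreevertices} equals zero.

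It then remains only to note that, since $C$ is now a fixed constant, the surviving first term $O(C\rho\,|\,V\,|)$ is simply $O(\rho\,|\,V\,|)$, which is exactly the claimed bound. I do not expect any genuine obstacle here, as the result is essentially immediate once the parameter is tuned. The only point deserving a little care is the order of quantifiers: the constant $c$ hidden in the $\Omega(\cdot)$ must be treated as fixed \emph{before} $C$ is selected, so that the constant eventually absorbed into $O(\rho\,|\,V\,|)$ (which swallows $C$, and hence $1/c$) depends on $c$ alone and not on $\rho$ or on the particular graph. Since $c$ is universal, this dependence is harmless.
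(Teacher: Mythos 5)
Your proposal is correct and follows exactly the paper's own one-line argument: invoke Lemma~\ref{aboundinthenumberofsmalldegreevertices} with a sufficiently large constant $C$ so that the set of small-indegree vertices is empty and only the $O(C\rho\,|\,V\,|)=O(\rho\,|\,V\,|)$ term survives. Your additional care about fixing the $\Omega(\cdot)$ constant before choosing $C$ is a sound elaboration of the same idea, not a different route.
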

\begin{proof}
Invoke Lemma~\ref{aboundinthenumberofsmalldegreevertices}
with a sufficiently large constant $C$.
\end{proof}

Corollary~\ref{simpleconsequenceforregulargraphs}
holds, e.g., for
$\Omega((1/\rho)\,\ln{(e/\rho)})$-regular graphs.
%Clearly, for $d$-regular graphs,

\section{Bounds for connected scale-free graphs}
\label{connectedscalefreegraphssection}

Let $G(V,E)$ be a connected scale-free graph
with diameter $\Delta$
and $2<\gamma<3$ be a constant
such that
$G$ has an $O(1/k^\gamma)$ fraction of
%$G$'s fraction of
degree-$k$ vertices,
%is
%proportional to $1/k^\gamma,$
$k\in\mathbb{Z}^+$.
This section shows that
$\text{min-seed}^{(\Delta)}(G,\rho)
=O(\lceil\rho^{\gamma-1}\,|\,V\,|\rceil),$
which suggests rapid
reversible
%global
cascades
because scale-free graphs typically have
small diameters or
%small
average vertex-vertex
distances~\cite{NSW01, CH03, BR04, CL04a, CL04b}.
In the case of asynchronous reversible cascades,
$\text{min-seed}^\text{async}_\text{monotone}(G,\rho)
=O(\lceil\rho^{\gamma-1}\,|\,V\,|\rceil)$.
Compared with
Fact~\ref{themainresultonirreversiblecascadesinERgraphs},
therefore,
activating all vertices
%triggering cascades
%in all synchronous reversible cascades
requires fewer seeds
%is easier
%for
on
connected scale-free graphs
than
%for
on
Erd\H{o}s-R\'enyi random graphs
$G(n,p)$
for
%a
%large
%range
%ranges
%of $\rho\in(\,0,1\,]$
%and $p\in[\,0,1\,]$
%(note that the bound in
%Fact~\ref{themainresultonirreversiblecascadesinERgraphs}
%can be stated as $\Theta(\lceil\min\{\delta,\rho\}\,n\rceil)$
%without loss of generality because the number of seeds
%is an integer).
$\omega(1/n)\le\rho\le o(1)$
and $p=\Omega((\ln(e/\rho))/(\rho n))$
with a sufficiently large hidden constant in the
$\Omega(\cdot)$ notation.
This holds
even if
%(1)
we
allow deactivations
and
require the cascades to succeed
%under the reversible model and
regardless of the (progressive) synchronizations
%only
in connected scale-free graphs
but not in Erd\H{o}s-R\'enyi random graphs.
%in the former.
%and (2).

\comment{%looks cumbersome, or even awkward
Given
Fact~\ref{themainresultonirreversiblecascadesinERgraphs},
it is not clear whether
$\text{min-seed}^\text{async}(G,\rho)=O(\rho\,|\,V\,|)$.
On the one hand,
%the reversibility of the cascades
allowing deactivations of vertices
may make it less likely for a set of seeds to
activate all vertices
than in Fact~\ref{themainresultonirreversiblecascadesinERgraphs}.
Furthermore,
proving
$\text{min-seed}^\text{async}(G,\rho)=O(\rho\,|\,V\,|)$
requires
the existence of $O(\rho\,|\,V\,|)$ seeds that
activate all vertices no matter how activations and
deactivations are synchronized.
On the other hand,
the
distribution of
degrees
%of vertices
in scale-free graphs
%are distributed
may enable
bounding
$\text{min-seed}^\text{async}(G,\rho)$
in a way impossible for Erd\H{o}s-R\'enyi random graphs.
This section shows that
$\text{min-seed}^\text{async}(G,\rho)
=O(\lceil\rho^{\gamma-1}\,|\,V\,|\rceil)$.
%$\text{min-seed}^\text{async}(G,\rho)
%=O(\lceil\rho^{\gamma-1}\,|\,V\,|\rceil)
%=O(\lceil\rho\,|\,V\,|\rceil)$.
%So for $\rho=o(1),$
%$\text{min-seed}^\text{async}(G,\rho)
%=o(\lceil\rho\,|\,V\,|\rceil)$.
In the case of synchronous reversible cascades
and writing $\Delta$ for the diameter of $G,$
$\text{min-seed}^{(\Delta)}(G,\rho)
=O(\lceil\rho^{\gamma-1}\,|\,V\,|\rceil),$
%where $\Delta$ denotes the diameter of $G$.
which suggests rapid global cascades
because
%As
scale-free graphs typically have
%a
small
%distances between vertices
diameters
or average vertex-vertex
distances~\cite{NSW01, CH03, BR04, CL04a, CL04b}.
}%looks cumbersome, or even awkward

\begin{theorem}\label{maintheoremforscalefreegraphs}
Let
%$C>0$ be a constant,
$G(V,E)$ be an undirected connected graph
with diameter $\Delta$
and
$\rho\in(\,0,1\,]$.
%and $\gamma>2$.
% such that
If
%For constants
$C>0$ and $\gamma>2$
are constants
with
%, for $k\in\mathbb{N},$
%the fracton of vertices in $$
\begin{eqnarray}
\frac{\left|\,\left\{v\in V\mid d(v)=k\right\}\,\right|}{|\,V\,|}\le
\frac{C}{k^\gamma}
\label{thescalefreecondition}
\end{eqnarray}
for all $k\in\mathbb{Z}^+,$
then
\begin{eqnarray}
\text{\rm min-seed}^{(\Delta)}\left(G,\rho\right)
=O\left(\left\lceil\rho^{\gamma-1}\,|\,V\,|\right\rceil\right).
\label{theboundinthescalefreecase}
\end{eqnarray}
\end{theorem}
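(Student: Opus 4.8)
The plan is to derive Eq.~(\ref{theboundinthescalefreecase}) directly from Lemma~\ref{thebasisforscalefreegraphs}, which already bounds $\text{min-seed}^{(\Delta)}(G,\rho)$ by $2+\sum_{v\in V,\,d(v)>1/\rho}(\lceil\rho\,d(v)\rceil+1)$. All that remains is to show this sum is $O(\rho^{\gamma-1}\,|\,V\,|)$; the additive constant $2$ and the ceiling in Eq.~(\ref{theboundinthescalefreecase}) then take care of themselves, since $\rho^{\gamma-1}\,|\,V\,|\le \lceil\rho^{\gamma-1}\,|\,V\,|\rceil$ and $2=O(1)=O(\lceil\rho^{\gamma-1}\,|\,V\,|\rceil)$ because the ceiling is at least $1$.

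First I would simplify each summand: every $v$ in the sum has $d(v)>1/\rho$, hence $\rho\,d(v)>1$, so $\lceil\rho\,d(v)\rceil+1\le\rho\,d(v)+2\le 3\rho\,d(v)$. This reduces the problem to proving $\sum_{v:\,d(v)>1/\rho}d(v)=O(\rho^{\gamma-2}\,|\,V\,|)$. The main step is then to group vertices by degree and invoke the scale-free hypothesis Eq.~(\ref{thescalefreecondition}): writing $n_k=|\{v\in V:d(v)=k\}|\le C\,|\,V\,|/k^\gamma$, the sum becomes $\sum_{k>1/\rho}k\,n_k\le C\,|\,V\,|\sum_{k>1/\rho}k^{1-\gamma}$. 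The crucial estimate is the tail $\sum_{k>1/\rho}k^{-(\gamma-1)}$. Since $\gamma>2$ gives $\gamma-1>1$, the summand $x^{-(\gamma-1)}$ is decreasing and integrable near infinity, so comparing the series with $\int_{1/\rho}^\infty x^{-(\gamma-1)}\,dx=\rho^{\gamma-2}/(\gamma-2)$ (plus a leading term at most $\rho^{\gamma-1}\le\rho^{\gamma-2}$) yields $\sum_{k>1/\rho}k^{-(\gamma-1)}=O(\rho^{\gamma-2})$, with the hidden constant depending only on $\gamma$.

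This tail estimate is where I expect the only real obstacle to lie, and it is precisely where the exponent $\gamma-1$ of the final bound is manufactured: one factor of $\rho$ comes from the per-vertex $\lceil\rho\,d(v)\rceil$ activation cost and the remaining $\rho^{\gamma-2}$ from the convergence of the heavy-degree tail, which fails exactly when $\gamma\le 2$. The mild arithmetic around the smallest integer exceeding $1/\rho$ must be arranged so the bound is uniform over all $\rho\in(\,0,1\,]$, but this is routine, since for $\rho$ bounded away from $0$ the claimed bound is $\Theta(|\,V\,|)$ and holds trivially by seeding every vertex. Combining the pieces gives $\sum_{v:\,d(v)>1/\rho}(\lceil\rho\,d(v)\rceil+1)\le 3\rho\cdot C\,|\,V\,|\cdot O(\rho^{\gamma-2})=O(\rho^{\gamma-1}\,|\,V\,|)$, and adding the constant $2$ establishes Eq.~(\ref{theboundinthescalefreecase}).
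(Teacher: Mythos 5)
Your proposal is correct and follows essentially the same route as the paper: reduce to Lemma~\ref{thebasisforscalefreegraphs}, bound each summand by $O(\rho\,d(v))$ using $d(v)>1/\rho$, group vertices by degree via Eq.~(\ref{thescalefreecondition}), and control the tail $\sum_{k>1/\rho}k^{1-\gamma}$ by comparison with $\int x^{1-\gamma}\,\mathrm{d}x=O(\rho^{\gamma-2})$, dispatching large $\rho$ trivially. The only cosmetic difference is that you handle the leading term of the tail separately while the paper shifts the integral's lower limit to $1/\rho-1$; both are valid.
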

\begin{proof}
We may assume without loss of generality that $\rho<0.1$
for, otherwise, Eq.~(\ref{theboundinthescalefreecase}) holds trivially (note
that $\gamma$ is a constant).
By Lemma~\ref{thebasisforscalefreegraphs},
\begin{eqnarray}
\text{\rm min-seed}^{(\Delta)}\left(G,\rho\right)
=O\left(
2
+
%\left\lceil
\sum_{v\in V, d(v)>1/\rho}\, \rho\,d(v)
%\right\rceil
\right).
\label{justwritingtheboundinaslightlymorereadableform}
\end{eqnarray}
Now,
%By Eq.~(\ref{thescalefreecondition}),
\begin{eqnarray*}
&&\sum_{v\in V, d(v)>1/\rho}\, d(v)\\
%&\le&
&\stackrel{\text{Eq.~(\ref{thescalefreecondition})}}{\le}&
\sum_{k\in\mathbb{N}, k>1/\rho}\,
k\cdot\frac{C\,|\,V\,|}{k^\gamma}\\
&\le& \int_{1/\rho-1}^\infty\! \frac{C\,|\,V\,|}{x^{\gamma-1}}\,
\mathrm{d}x\\
&=&
O\left(\rho^{\gamma-2}|\,V\,|\right),
%O\left(\frac{C\,|\,V\,|}{\gamma-2}\rho^{\gamma-2}\right),
\end{eqnarray*}
where the second inequality follows from elementary calculus
and the $O(\cdot)$ notation hides constants dependent on
$C$ and $\gamma$.
%and the hidden constants in the $O(\cdot)$ notation depend on $C$ and $\gamma$.
This and Eq.~(\ref{justwritingtheboundinaslightlymorereadableform})
complete the proof.
\end{proof}

\begin{corollary}
Let
%$C>0$ be a constant,
$G(V,E)$ be an undirected connected graph
with diameter $\Delta$ and $\rho\in(\,0,1\,]$.
If $C>0$ and $\gamma>2$ are constants
satisfying Eq.~(\ref{thescalefreecondition})
%\begin{eqnarray}
%\frac{\left|\,\left\{v\in V\mid d(v)=k\right\}\,\right|}{|\,V\,|}\le
%\frac{C}{k^\gamma}
%\label{thescalefreecondition}
%\end{eqnarray}
for all $k\in\mathbb{Z}^+,$
then
\begin{eqnarray}
%\text{\rm min-seed}^{\text{\rm async}}\left(G,\rho\right)
\text{\rm min-seed}^{\text{\rm async}}_\text{\rm monotone}\left(G,\rho\right)
=O\left(\left\lceil\rho^{\gamma-1}\,|\,V\,|\right\rceil\right).
\end{eqnarray}
%Furthermore, there exists a
%%size-
%set of
%$O(\rho^{\gamma-1}\,|\,V\,|)$
%%set of
%seeds activating all vertices
%such that no active vertices can ever be deactivated
%in any asynchronous
%%cascading
%%process
%reversible cascade
%%of activation
%with threshold $\rho$.
\end{corollary}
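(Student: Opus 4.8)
The plan is to repeat the argument proving Theorem~\ref{maintheoremforscalefreegraphs} essentially verbatim, but starting from the monotone asynchronous bound of Corollary~\ref{thebasisforscalefreegraphsthemonotoneversion} rather than from the synchronous bound of Lemma~\ref{thebasisforscalefreegraphs}. The crucial observation is that these two bounds have \emph{identical} right-hand sides, namely $2+\sum_{v\in V,\, d(v)>1/\rho}(\lceil\rho\,d(v)\rceil+1)$; hence every degree-distribution estimate already carried out for the synchronous quantity transfers immediately to the monotone asynchronous quantity, and no new combinatorial analysis is required.

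Concretely, I would first dispose of the degenerate cases. If $|\,V\,|=1,$ then the unique vertex is isolated, must itself be a seed (it never updates its state), and the bound holds with a single seed; so I may assume $|\,V\,|\ge 2,$ which is exactly the hypothesis needed to invoke Corollary~\ref{thebasisforscalefreegraphsthemonotoneversion}. I would also assume $\rho<0.1,$ since otherwise $\lceil\rho^{\gamma-1}\,|\,V\,|\rceil=\Omega(|\,V\,|)$ (as $\gamma$ is constant) and the claimed upper bound is trivial.

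Next I would invoke Corollary~\ref{thebasisforscalefreegraphsthemonotoneversion} to write
$$\text{min-seed}^{\text{async}}_{\text{monotone}}(G,\rho)\le 2+\sum_{v\in V,\, d(v)>1/\rho}\bigl(\lceil\rho\,d(v)\rceil+1\bigr)=O\!\left(2+\sum_{v\in V,\, d(v)>1/\rho}\rho\,d(v)\right),$$
where the last step absorbs the $+1$ terms exactly as in the passage from Lemma~\ref{thebasisforscalefreegraphs} to Eq.~(\ref{justwritingtheboundinaslightlymorereadableform}), using $\rho\,d(v)>1$ on the range of summation. Then I would reuse, line for line, the scale-free summation estimate from the proof of Theorem~\ref{maintheoremforscalefreegraphs}: bounding the number of degree-$k$ vertices via Eq.~(\ref{thescalefreecondition}) and comparing the resulting sum $\sum_{k>1/\rho}k\cdot C\,|\,V\,|/k^{\gamma}$ to the integral $\int_{1/\rho-1}^{\infty}C\,|\,V\,|/x^{\gamma-1}\,\mathrm{d}x,$ which converges for $\gamma>2$ and evaluates to $O(\rho^{\gamma-2}\,|\,V\,|)$. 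Multiplying by $\rho$ yields $O(\rho^{\gamma-1}\,|\,V\,|)=O(\lceil\rho^{\gamma-1}\,|\,V\,|\rceil),$ completing the argument.

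Since the genuine work was already done in establishing Corollary~\ref{thebasisforscalefreegraphsthemonotoneversion}---where Fact~\ref{ifseedsarestablethentheprocessismonotone} converts the stability property Eq.~(\ref{thetakenseedsarestable}) of the chosen seeds into a guarantee that holds against \emph{every} progressive synchronization---I do not expect a real obstacle here. The only point demanding a moment's care is confirming that the two upper bounds truly coincide, so that the monotone asynchronous case inherits the same summation bound as the synchronous case; once that is noted, the corollary follows by an identical calculation.
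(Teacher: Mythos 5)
Your proposal is correct and matches the paper's proof exactly: the paper likewise proves this corollary by rerunning the proof of Theorem~\ref{maintheoremforscalefreegraphs} with Corollary~\ref{thebasisforscalefreegraphsthemonotoneversion} substituted for Lemma~\ref{thebasisforscalefreegraphs}, relying on the fact that the two upper bounds have identical right-hand sides. Your additional remarks on the degenerate cases are harmless elaboration, not a divergence in method.
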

\begin{proof}
%Immediate from
% Fact~\ref{ifseedsarestablethensynchronizationinvariant}
%and
In the proof of
Theorem~\ref{maintheoremforscalefreegraphs},
invoke Corollary~\ref{thebasisforscalefreegraphsthemonotoneversion}
instead of Lemma~\ref{thebasisforscalefreegraphs}.
\end{proof}

\section{Bounds for Erd\H{o}s-R\'enyi random graphs}
\label{ErdosRenyirandomgraphssection}

%Given
Fact~\ref{themainresultonirreversiblecascadesinERgraphs}
%it does not seem very likely at the first glance
does not give
that
\begin{eqnarray*}
\text{min-seed}^{(1)}\left(G(n,p),\rho\right)=O\left(\rho n\right),\\
%and
%\text{min-seed}^{\text{async}}\left(G(n,p),\rho\right)=O\left(\rho n\right)
\text{min-seed}^{\text{async}}_\text{monotone}\left(G(n,p),\rho\right)=O\left(\rho n\right)
\end{eqnarray*}
because of possible deactivations,
the requirement of activating all vertices
in only one round
and the arbitrary (yet progressive) synchronizations of
the cascades.
%activations and deactivations.
Still, this section proves both equations
for $p=\Omega((\ln(e/\rho))/(\rho n))$.

\begin{lemma}\label{takingcareofalargevalueofrho}
Let
$n\in\mathbb{Z}^+,$
%$\rho\in (\,0,1\,],$
$\rho\in [\,1/n^{1/3},1\,],$
$C>1$
and
%$p\ge 4(\ln(e/\rho))/(C\rho n)$.
$p\ge (100/(C\rho n))\ln(e/\rho)$.
%$p\in[\,0,1\,]$.
Then with probability
$1-\exp{(-\Omega((\rho^2 n/C)\ln{(e/\rho)}))}$
%$1-\exp{(-\text{\rm poly}(\rho)pn^2)}$
over the Erd\H{o}s-R\'enyi random graphs $G(n,p),$
\begin{eqnarray}
\text{\rm min-seed}^{(1)}\left(G(n,p),\rho\right)
=O\left(
C\rho n
\right).
\label{thisistheboundforERgraphs}
\end{eqnarray}
\end{lemma}
\begin{proof}
Assume $\rho<1/(10C)$ for, otherwise, Eq.~(\ref{thisistheboundforERgraphs})
holds trivially.
By Fact~\ref{badvertices},
\begin{eqnarray}
\Pr\left[\,\left|\,\left\{v\in V\mid d(v)
<\frac{1}{C\rho}\ln{\frac{e}{\rho}}\right\}\,\right|
\ge
\left\lfloor
\rho^3 n
\right\rfloor
\,\right]
\le \exp{\left(
%\Theta\left(
%\rho^3 n\left(\ln{\frac{e}{\rho}}\right)
%-\frac{\rho^2 n}{C} \ln{\frac{e}{\rho}}
%\right)\right)}
-\Omega\left(\frac{\rho^2 n}{C}\ln{\frac{e}{\rho}}
\right)\right)}.
%\label{probabilityofhavingmanylowdegreevertices}
\nonumber
\end{eqnarray}
%If $\rho<1/(2C)$ for a sufficiently small constant $\rho>0,$
%then Eq.~(\ref{probabilityofhavingmanylowdegreevertices})
This
and
%Now invoking
Lemma~\ref{aboundinthenumberofsmalldegreevertices}
%completes
complete
the proof.
%Otherwise,
\end{proof}

\begin{lemma}\label{takingcareofsmallvaluesofrho}
Let
$n\in\mathbb{Z}^+,$
$\rho\in (\,0,1\,],$
$C>1$
and
% $p\ge 2(\ln(e/\rho))/(\rho n)$.
%$p\in[\,0,1\,]$.
$p\ge (100/(C\rho n))\ln(e/\rho)$.
%$p\ge 100\,(\ln(e/\rho))/(C\rho n)$.
Then with probability
% $1-\exp{(-\Omega(1/\rho))}$
%$1-\text{\rm poly}(1/\rho)\exp{(-\Omega(pn))}$
%$1-O((1/\rho^3)\exp{(-\Omega((1/(C\rho))\ln{(e/\rho)}))})$
$1-(1/\rho^3)\exp{(-\Omega((1/(C\rho))\ln{(e/\rho)}))}$
over the Erd\H{o}s-R\'enyi random graphs $G(n,p),$
$$\text{\rm min-seed}^{(1)}\left(G(n,p),\rho\right)
=O\left(C\rho n\right).$$
\end{lemma}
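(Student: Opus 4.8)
The plan is to apply Lemma~\ref{aboundinthenumberofsmalldegreevertices} and then control the only random quantity on its right-hand side, the number of low-degree vertices, by a first-moment (Markov) argument rather than the Chernoff/union-bound packaging of Fact~\ref{badvertices} used in Lemma~\ref{takingcareofalargevalueofrho}. The reason for switching tools is that Fact~\ref{badvertices} is only available when one may take $\kappa=\rho^3\in\{1/n,\ldots,n/n\}$, i.e. when $\rho\ge n^{-1/3}$; the present lemma is meant precisely for the complementary small-$\rho$ regime, where no such $\kappa$ exists. As a preliminary reduction I would assume without loss of generality that $\rho<1/(10C)$, since otherwise $O(C\rho n)=\Omega(n)$ already dominates the trivial bound $\text{min-seed}^{(1)}\le n$ and the claim holds.

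Write $t=(1/(C\rho))\ln(e/\rho)$ for the degree threshold appearing in Lemma~\ref{aboundinthenumberofsmalldegreevertices} and let $L=|\{v\in V\mid d(v)<t\}|$. The first step is a per-vertex Chernoff estimate: each degree $d(v)$ is a sum of $n-1$ independent Bernoulli($p$) variables with mean $\mu=(n-1)p$, which is at least $pn/2\ge 50t$ for $n\ge 2$, the last inequality using the hypothesis $p\ge(100/(C\rho n))\ln(e/\rho)$. Applying the lower-tail bound (Fact~\ref{Chernofflowertail}) with $\delta=1-t/\mu\ge 0.98$ gives $\Pr[d(v)<t]\le\exp(-\Omega(t))$, whence linearity of expectation yields $E[L]\le n\exp(-\Omega(t))$.

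The second step converts this mean bound into a tail bound on the additive error term $L\cdot((1/(C\rho))\ln(e/\rho)+1)$ of Lemma~\ref{aboundinthenumberofsmalldegreevertices}. Since the multiplier is a deterministic constant of order $(1/(C\rho))\ln(e/\rho)$ (the $+1$ being negligible because $\rho<1/(10C)$), Markov's inequality (Fact~\ref{Markovinequality}) applied to $L$ bounds the probability that this term exceeds $C\rho n$ by $E[L]\cdot((1/(C\rho))\ln(e/\rho))/(C\rho n)=O((\ln(e/\rho))/(C^2\rho^2))\exp(-\Omega(t))$. On the complementary event the error term is below $C\rho n$, so Lemma~\ref{aboundinthenumberofsmalldegreevertices} gives $\text{min-seed}^{(1)}(G(n,p),\rho)=O(C\rho n)$, as required. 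I must use Markov here rather than a Chernoff bound on $L$ itself, because the indicator events $\{d(v)<t\}$ share edges and are not independent, so only the first moment of $L$ is readily available; this is the main conceptual obstacle and the reason the failure probability is weaker than in Lemma~\ref{takingcareofalargevalueofrho}.

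The final, bookkeeping step is to check that the prefactor collapses to the claimed $1/\rho^3$: writing $(\ln(e/\rho))/(C^2\rho^2)=(1/\rho^3)\cdot(\rho\ln(e/\rho))/C^2$ and noting that $\rho\ln(e/\rho)=O(1)$ on $(0,1]$ and $C>1$, the prefactor is $O(1/\rho^3)$, and the leftover constant can be absorbed into the $\Omega(\cdot)$ in the exponent whenever the bound is nonvacuous (large $t$, i.e. small $\rho$). I expect this matching of constants to the exact target $1-(1/\rho^3)\exp(-\Omega((1/(C\rho))\ln(e/\rho)))$, together with the justification for replacing the non-independent count by Markov's inequality, to be the only delicate points; the remaining manipulations are routine.
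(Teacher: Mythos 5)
Your proposal is correct and follows essentially the same route as the paper: a per-vertex Chernoff lower-tail estimate $\Pr[d(v)<t]\le\exp(-\Omega(t))$, linearity of expectation for the count $L$ of low-degree vertices, Markov's inequality (used precisely because the degree indicators are dependent), and finally Lemma~\ref{aboundinthenumberofsmalldegreevertices}. The only cosmetic difference is that the paper applies Markov to the event $L\ge\rho^3 n$, which produces the $1/\rho^3$ prefactor directly, whereas you threshold the whole error term at $C\rho n$ and then show the resulting prefactor is $O(1/\rho^3)$; both give the stated bound.
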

\begin{proof}
By Chernoff's bound (Fact~\ref{Chernofflowertail}),
$$\Pr\left[\,d(v)<\frac{1}{C\rho}\ln{\frac{e}{\rho}}\,\right]
\le
\exp{\left(-\Omega\left(\frac{1}{C\rho}\ln{\frac{e}{\rho}}\right)\right)},
$$
$v\in\{1,2,\ldots,n\}$.
So by the linearity of expectation,
$$E\left[\,\left|\,\left\{v\in
%[n]
\{1,2,\ldots,n\}
\mid d(v)
<\frac{1}{C\rho}\ln{\frac{e}{\rho}}\right\}\,\right|\,\right]
\le
n\,\exp{\left(-\Omega\left(\frac{1}{C\rho}\ln{\frac{e}{\rho}}\right)\right)},
$$
implying
%{\small%a little long, of course, we can use [n] instead
$$
\Pr\left[\,\left|\,\left\{v\in
\{1,2,\ldots,n\}
%[n]
\mid d(v)
<\frac{1}{C\rho}\ln{\frac{e}{\rho}}\right\}\,\right|\ge \rho^3 n\,\right]
\le
\frac{1}{\rho^3}\,
\exp{\left(-\Omega\left(\frac{1}{C\rho}\ln{\frac{e}{\rho}}\right)\right)}
$$
%}%a little long, of course, we can use [n] instead
by Markov's inequality (Fact~\ref{Markovinequality}).
This and Lemma~\ref{aboundinthenumberofsmalldegreevertices}
%imply
%$$E\left[\,\text{min-seed}^{(1)}\left(G(n,p),\rho\right)\,\right]
%\le $$
complete the proof.
\end{proof}

\begin{theorem}\label{themainresultforERgraphs}
Let $n\in\mathbb{Z}^+,$
$\rho\in (\,0,1\,]$ and
$p=\Omega((\ln(e/\rho))/(\rho n))$.
%$p\ge (\ln(e/\rho))/(\rho n)$.
Then with probability $1-\exp{(-n^{\Omega(1)})}$
over the Erd\H{o}s-R\'enyi random graphs $G(n,p),$
$$\text{\rm min-seed}^{(1)}\left(G(n,p),\rho\right)
=O\left(\rho n\right).$$
\end{theorem}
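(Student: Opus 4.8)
The plan is to assemble Theorem~\ref{themainresultforERgraphs} from Lemmas~\ref{takingcareofalargevalueofrho} and~\ref{takingcareofsmallvaluesofrho}, each of which already yields the target bound $O(C\rho n)$ under the threshold hypothesis $p \geq (100/(C\rho n))\ln(e/\rho)$ but with a different failure probability. First I would fix the constant $C$. Since $p = \Omega((\ln(e/\rho))/(\rho n))$ means $p \geq c_0(\ln(e/\rho))/(\rho n)$ for some constant $c_0 > 0$, choosing $C = \max\{2, 100/c_0\}$ (a constant independent of $n$ and $\rho$) guarantees $p \geq (100/(C\rho n))\ln(e/\rho)$, so both lemmas apply; moreover their conclusion $O(C\rho n)$ is then exactly the desired $O(\rho n)$.

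Next I would split on the size of $\rho$ against the threshold $n^{-1/3}$. For $\rho \in [n^{-1/3}, 1]$, I would invoke Lemma~\ref{takingcareofalargevalueofrho}: its failure probability $\exp(-\Omega((\rho^2 n/C)\ln(e/\rho)))$ is at most $\exp(-\Omega(n^{1/3}/C))$, because $\rho^2 n \geq n^{1/3}$ and $\ln(e/\rho) \geq 1$, which is $\exp(-n^{\Omega(1)})$ for the fixed constant $C$. For $\rho < n^{-1/3}$, I would instead invoke Lemma~\ref{takingcareofsmallvaluesofrho}, where $1/\rho > n^{1/3}$ forces the exponent $(1/(C\rho))\ln(e/\rho)$ to be $\Omega(n^{1/3}/C)$.

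The only step needing genuine care is the polynomial prefactor $1/\rho^3$ in the small-$\rho$ failure probability $(1/\rho^3)\exp(-\Omega((1/(C\rho))\ln(e/\rho)))$, which I must show is swamped by the exponent. I would control it by noting that the hypotheses already preclude $\rho$ from being too small: since $p \leq 1$ and $p = \Omega((\ln(e/\rho))/(\rho n))$ give $\ln(e/\rho) = O(\rho n)$, and since $\ln(e/\rho) \geq 1$, we obtain $\rho = \Omega(1/n)$, hence $\ln(1/\rho^3) = O(\ln n)$. The failure probability is therefore $\exp(O(\ln n) - \Omega(n^{1/3}/C)) = \exp(-n^{\Omega(1)})$, the logarithmic prefactor being absorbed by the polynomial-in-$n$ exponent. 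Combining the two cases gives the theorem, and I expect this prefactor-versus-exponent comparison to be the main (indeed the only nontrivial) obstacle.
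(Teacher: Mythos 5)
Your proposal is correct and follows exactly the paper's route: the paper's own proof is the one-line instruction to fix a sufficiently large constant $C$ and invoke Lemma~\ref{takingcareofalargevalueofrho} when $\rho\ge 1/n^{1/3}$ and Lemma~\ref{takingcareofsmallvaluesofrho} otherwise. Your additional verification that the two failure probabilities are $\exp(-n^{\Omega(1)})$ (including absorbing the $1/\rho^3$ prefactor via $\rho=\Omega(1/n)$) is exactly the bookkeeping the paper leaves implicit.
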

\begin{proof}
For a sufficiently large constant $C,$ invoke
%Invoke
Lemma~\ref{takingcareofalargevalueofrho}
if $\rho\ge 1/n^{1/3}$
and Lemma~\ref{takingcareofsmallvaluesofrho}
otherwise.
\end{proof}

By
%Together with
Fact~\ref{themainresultonirreversiblecascadesinERgraphs}
%,
and
Theorem~\ref{themainresultforERgraphs},
for $p=\Omega((\ln(e/\rho))/(\rho n))$
with a sufficiently large hidden constant in the
$\Omega(\cdot)$ notation,
$$\text{\rm min-seed}^{(k)}\left(G(n,p),\rho\right)
=\Theta\left(\rho n\right)$$
%for all
with probability $1-n^{-\Omega(1)}$
for all
$k\in\mathbb{Z}^+$.
This is asymptotically the same as the
% the minimum
%number of seeds needed to activate all vertices
%of $G(n,p)$ under the irreversible cascades.
bound in Fact~\ref{themainresultonirreversiblecascadesinERgraphs}
for irreversible cascades.

\comment{%seems subsumed by other corollaries, etc
\begin{corollary}
There exists a constant $\beta>0$ such that for $n\in\mathbb{Z}^+,$
$\rho\in (\,0,1\,],$
$p\ge\beta(\ln(e/\rho))/(\rho n)$
and with probability $1-n^{-\Omega(1)}$
over $G(n,p),$
$$\text{\rm min-seed}^{(k)}\left(G(n,p),\rho\right)
=\Theta(\rho n)$$
for all $k\in\mathbb{Z}^+$.
%Furthermore,
%for irreversible cascades,
%the minimum number of
\end{corollary}
}%seems subsumed by other corollaries, etc

\comment{%said above in a more detailed form
For $p=\Omega((\ln{(e/\min\{\delta,\rho\})})/(\rho n))$
with a sufficiently large hidden constant in the
$\Omega(\cdot)$ notation,
the $O\left(\rho n\right)$ bound in
Theorem~\ref{themainresultforERgraphs}
is asymptotically optimal
%for
by Fact~\ref{themainresultonirreversiblecascadesinERgraphs}.
}%said above in a more detailed form

\begin{corollary}
Let $n\in\mathbb{Z}+,$ $\rho\in (\,0,1\,]$ and
$p=\Omega((\ln(e/\rho))/(\rho n))$.
Then with probability $1-\exp{(-n^{\Omega(1)})}$
over the Erd\H{o}s-R\'enyi random graphs $G(n,p),$
$$\text{\rm min-seed}^{\text{\rm async}}_\text{\rm monotone}\left(G(n,p),\rho\right)
=O\left(\rho n\right).$$
%$$\text{\rm min-seed}^{\text{\rm async}}\left(G(n,p),\rho\right)
%=O\left(\rho n\right).$$
%Furthermore,
%there exists a
%set of
%%size-
%$O(\rho n)$
%%set of
%seeds
%activating all vertices
%such that no active vertices can ever be deactivated
%in any asynchronous
%%cascading
%%process
%reversible cascade
%%of activation
%with threshold $\rho$.
\end{corollary}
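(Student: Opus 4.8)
The plan is to read the corollary off directly from Theorem~\ref{themainresultforERgraphs} by feeding its conclusion into Fact~\ref{ifseedsarestablethentheprocessismonotone}. No new probabilistic estimate is needed: the key observation is that a seed set which activates \emph{every} vertex in a single round is automatically stable in the sense of Eq.~(\ref{sayingtheseedsarestable}), so the hypotheses of Fact~\ref{ifseedsarestablethentheprocessismonotone} come for free with $t=1$.

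First I would invoke Theorem~\ref{themainresultforERgraphs}: with probability $1-\exp(-n^{\Omega(1)})$ over $G(n,p)$ one has $\text{\rm min-seed}^{(1)}(G(n,p),\rho)=O(\rho n)$. I would then fix any graph $G$ in this high-probability event and let $W\subseteq V$ be a witnessing seed set, so that $|W|=O(\rho n)$ and $\text{\sf Active}^{(1)}(W,G,\rho)=V$.

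Second I would verify the two hypotheses of Fact~\ref{ifseedsarestablethentheprocessismonotone} with $S=W$ and $t=1$. The condition $V=\text{\sf Active}^{(1)}(W,G,\rho)$ is precisely the defining property of $W$. The stability condition $W\subseteq\text{\sf Active}^{(1)}(W,G,\rho)$ holds because $\text{\sf Active}^{(1)}(W,G,\rho)=V\supseteq W$; that is, once all vertices (the seeds included) are active in round one, the seeds in particular are active. Fact~\ref{ifseedsarestablethentheprocessismonotone} then yields $\text{\rm min-seed}^{\text{\rm async}}_{\text{\rm monotone}}(G,\rho)\le|W|=O(\rho n)$, and since this holds for every $G$ in the high-probability event the claimed bound follows with probability $1-\exp(-n^{\Omega(1)})$.

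I do not anticipate a genuine obstacle here: all the probabilistic content lives in Theorem~\ref{themainresultforERgraphs}, and the present corollary is a free upgrade from the synchronous one-round guarantee to the monotone asynchronous guarantee. The one point I would state carefully is exactly \emph{why} the stability hypothesis is automatic, namely that activation in a single round forces $W\subseteq V=\text{\sf Active}^{(1)}(W,G,\rho)$; had the synchronous bound required several rounds to activate $V$, one would instead have to check $S\subseteq\text{\sf Active}^{(1)}(S,G,\rho)$ as a separate condition, as is done by hand in the proofs of Lemmas~\ref{thebasis1forscalefreegraphs}--\ref{thebasis2forscalefreegraphs} for the scale-free case.
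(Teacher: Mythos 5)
Your proposal is correct and follows exactly the paper's route: apply Theorem~\ref{themainresultforERgraphs} to obtain a seed set $W$ of size $O(\rho n)$ with $\text{\sf Active}^{(1)}(W,G,\rho)=V$, then conclude via Fact~\ref{ifseedsarestablethentheprocessismonotone}. The only difference is that you explicitly verify the stability hypothesis $W\subseteq\text{\sf Active}^{(1)}(W,G,\rho)$ (which is indeed automatic since the right-hand side equals $V$), a step the paper leaves implicit.
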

\begin{proof}
By Theorem~\ref{themainresultforERgraphs},
there exists a set $S$ of $O(\rho n)$ seeds
with $\text{\sf Active}^{(1)}(S,G,\rho)=V$.
Hence the theorem follows from
%Immediate from Theorem~\ref{themainresultforERgraphs}
%and
Fact~\ref{ifseedsarestablethentheprocessismonotone}.
%Facts~\ref{ifseedsarestablethentheprocessismonotone}--\ref{ifseedsarestablethensynchronizationinvariant}.
\end{proof}

\section{Conclusions}\label{conclusionssection}

Reversible cascades are central in
local interaction games with full rationality~\cite{Mor00, Kle07}
and the propagation of transient faults in majority-based
systems~\cite{Pel02, FKRRS03, FLLPS04}.
We investigated
%synchronous and asynchronous
%reversible cascades
them
in connected scale-free and
Erd\H{o}s-R\'enyi random graphs.
%Our results show that
%Fact~\ref{themainresultonirreversiblecascadesinERgraphs}
%and
Suppose
%For
$\omega(1/n)\le\rho\le o(1)$
and $p=\Omega((\ln(e/\rho))/(\rho n))$
with a sufficiently large hidden constant in the
$\Omega(\cdot)$ notation.
Theorems~\ref{maintheoremforscalefreegraphs},~\ref{themainresultforERgraphs}
and
Fact~\ref{themainresultonirreversiblecascadesinERgraphs}
%Theorem~\ref{themainresultforERgraphs}
%Sec.~\ref{connectedscalefreegraphssection}--\ref{ErdosRenyirandomgraphssection}
show that
%In particular,
activating all vertices
% in
%requires
%fewer
%$O(\rho^{\gamma-1}\,|\,V\,|)$
%seeds
is easier (in terms of the number of seeds deployed)
on
%in
%for
connected scale-free graphs
than
on
%for
%in
Erd\H{o}s-R\'enyi random graphs $G(n,p)$.
%for a sufficiently small $\rho\in(\,0,1\,]$
%and a wide range over $p\in[\,0,1\,]$.
%for $\omega(1/n)\le\rho\le o(1)$
%and $p=\Omega((\ln(e/\rho))/(\rho n))$
%with a sufficiently large hidden constant in the
%$\Omega(\cdot)$ notation.
%connected scale-free graphs.
%For a connected scale-free graph $G(V,E),$
%only $O(\rho^{\gamma-1}\,|\,V\,|)$ seeds
However, an asymptotically smallest set of seeds
that activate
%activating
all
vertices of $G(n,p)$ (which has size $\Theta(\rho n)$
by Fact~\ref{themainresultonirreversiblecascadesinERgraphs}
and
Theorem~\ref{themainresultforERgraphs})
can do so in one round
by
%Fact~\ref{themainresultonirreversiblecascadesinERgraphs}
%and
Theorem~\ref{themainresultforERgraphs},
whereas
% the similar round complexity
%for connected scale-free graphs
%is only known to not exceed $\Delta$ by
%the seeds picked
%for
%in
Theorem~\ref{maintheoremforscalefreegraphs}
%only
%gives
does not show
seeds
%that activate
activating
all vertices of
%a
%picks seeds
%for
connected scale-free
graphs
within $O(1)$
%or extremely few
rounds.
%graph.
%that.
%take $\Delta$ rounds
It
thus
%is important, both from a graph-theoretical
%and
%a
%practical point of view,
remains
to
further
%study
investigate
the tradeoff
between the number of seeds and the round complexity
for activating all vertices in scale-free graphs.
%know how
%fast one can
%many rounds are necessary to
%activate all vertices
%of a scale-free graph
%with asymptotically the fewest seeds needed.
%if one wants to deploy an asymptotically minimum
%number of seeds.
%to refine the $\Delta$
%bound on the number of rounds
%for an asymptotically smallest set of seeds
%to activate all vertices in connected scale-free graphs.
%Such a refinement will determine how fast
%it is to activate all vertices
%using
%asymptotically
%fewest seeds.
Such tradeoffs are studied
for many graphs by Flocchini et al.~\cite{FKRRS03}
because of their theoretical and practical importance.
Another interesting direction
is to refine the $O(\lceil\rho^{\gamma-1}|\,V\,|\rceil)$ bound
in Theorem~\ref{maintheoremforscalefreegraphs}
%.
%To this end,
%there are many
for
%an abundance of
many
%the
% specific, yet
important
%,
%generative models
%types of
scale-free graphs such as those mentioned in
Sec.~\ref{introductionsection}.
%to be considered (see, e.g., Sec.~\ref{introductionsection}).

\comment{%gave a citation for the folklore thing
\appendix
\section{Proof of Fact~\ref{ifseedsarestablethentheprocessismonotone}.}
\begin{proof}[Proof of Fact~\ref{ifseedsarestablethentheprocessismonotone}.]
\comment{% proof is below. Starting with this is not too convenient
By Eq.~(\ref{sayingtheseedsarestable}),
every $v\in S$ satisfies
$$\left|\,N^\text{in}(v)\cap S\,\right|\ge \rho\,d^{\text{in}}(v).$$
Therefore,
if all vertices in $S$ are active at some time during an asynchronous
%cascading
%process,
reversible cascade,
then they will remain active after the next update of states.
Consequently,
all vertices in $S$ remain active throughout any asynchronous
%cascading
%process.
reversible cascade.
}% proof is below. Starting with this is not too convenient
In the proof,
consider an asynchronous
%cascading
%process
reversible cascade
%of activation
with $S$ as the set of seeds.
Suppose for contradiction that
%an active vertex $v$
active vertices
can be
deactivated.
% during an asynchronous process with $S$ as the set of seeds.
%Without loss of generality, assume that no vertices are deactivated
%before $v$ is.
Let $v\in V$ be
among
the first
%an
active
vertices
% vertex
that
% is
are
deactivated,
% i.e.,
so
no active vertices are deactivated before $v$ is.
%,
%breaking ties arbitrarily.
%the vertices in $S$ remain active forever,
%Let $v\in V$ be any active vertex that is deactivated.
If $v\notin S,$
then $v$ must have at least $\rho\,d^\text{in}$ active in-neighbors
at the first time of its activation,
after which deactivating $v$ requires
%But the deactivation of $v$ implies that one of $v$
%But $v$ can be deactivated only
%Once $v$ is activated, it can be deactivated only
%So $v$ is deactivated
% if
%after
lowering
its number of
active in-neighbors,
% first,
% drops
%below $\rho\,d^\text{in},$
%which contradicts
contradicting
the choice of $v$.
If $v\in S,$
%contradicting
%which contradicts
then
%By
Eq.~(\ref{sayingtheseedsarestable})
%,
%every $v\in S$ satisfies
gives
$$\left|\,N^\text{in}(v)\cap S\,\right|\ge \rho\,d^{\text{in}}(v).$$
So $v$ can be deactivated only
after a vertex in $S$ is deactivated,
% if its number of
%active in-neighbors drops below $\rho\,d^\text{in},$
%a contradiction again.
which contradicts the choice of $v$ again.
\end{proof}
}%gave a citation for the folklore thing

\comment{%the proof needs to be rewritten because $S$ can be empty
\begin{proof}
%We prove
For each $v\in V$
%with $d(v)>1/\rho,$
pick
%any
a
set $B(v)\subseteq N(v)$ of size
$\lceil\rho\,d(v)\rceil$.
Fix any $x^*\in V$ and let
\begin{eqnarray*}
X
=
\left\{
\begin{array}{ll}
\left\{v\in V\mid d(v)>\frac{1}{\rho}\right\},
& \text{if }\left\{v\in V\mid
d(v)>\frac{1}{\rho}\right\}\neq \emptyset;\\
\left\{x^*\right\}, &\text{otherwise}.
\end{array}
\right.
\end{eqnarray*}
Then
take
%Let
%any
$$
S=
%u\in
\bigcup_{v\in V, d(v)>1/\rho}\,
\left(B(v)\cup\{v\}\right)$$
as the set of seeds.
Clearly,
it suffices to
show that all vertices will be active
% at the end.
in round $\Delta$.

For any $u\in S,$
%one of the following cases holds:
%$|\,N^\text{in}(u)\cap S\,|\ge \lceil\rho\,d^\text{in}(v)\rceil$
we have
\begin{eqnarray}
\left|\,N(u)\cap S\,\right|
\ge \lceil\rho\,d(u)\rceil
\label{thetakenseedsarestable}
\end{eqnarray}
%holds
by the following arguments:
%with $d^\text{\rm in}(u)>1/\rho,$
\begin{itemize}
\item If $d(u)>1/\rho,$ then $B(u)\subseteq S,$
implying Eq.~(\ref{thetakenseedsarestable}).
% by the choice of $B(u)$.
\item Otherwise, $u\in B(v)$ for some $v\in V$ with $d(v)>1/\rho$.
%Consequently,
Therefore,
$v\in N(u)\cap S,$ implying
%.
%Again,
%Now
%the lefthand side of
Eq.~(\ref{thetakenseedsarestable})
% is
%at least $1$ whereas the righthand side equals $1$.
%holds because
as
$\rho\, d(u)\le 1$.
\end{itemize}
By Eq.~(\ref{thetakenseedsarestable}),
if
%the
all
vertices in $S$
% remain to be active in round $1$
are active in a round, then they will remain to be active in the
next round.
%no vertices in $S$ will ever be deactivated, i.e.,
Therefore,
as the vertices in $S$ are active in round zero,
\begin{eqnarray}
S\subseteq \bigcap_{k\ge 0}\,\text{\sf Active}^{(k)}\left(S,G,\rho\right).
%S\subseteq \text{\sf Active}^{(1)}\left(S,G,\rho\right).
\label{thebasicseedsarestable}
\end{eqnarray}
%$k\ge 0$.
For each $i\ge 0,$
%write
denote by
$P(i)$
% be
%for
the relation
%consider the condition
%This implies
\begin{eqnarray}
%P(i):\,
%\begin{array}[ll]
%\text{\sc true},
N^i[S]\subseteq \bigcap_{k\ge i}\, \text{\sf
Active}^{(k)}\left(S,G,\rho\right).
%;\\
%\end{array}
\label{thesteppingofactivation}
\end{eqnarray}
%which says that all vertices
%reachable from a vertex in $S$
%with distances less than or equal to $i$ from some vertex in $S$
%are active starting from round $i$.
% to forever.
By construction, every $w\in V\setminus S$
%has a degree less than or equal to $1/\rho$
satisfies $d(w)\le 1/\rho$
and thus
\begin{eqnarray}
%\lceil\rho\,d(w)\rceil=1.
\left\lceil\rho\,d(w)\right\rceil=1.
\label{nonseedscanbeactivatedbyjustoneactiveneighbor}
\end{eqnarray}
%i.e., $w$ is activated in a round if it neighbors an active vertex
%in the previous round.
%Therefore,
%Hence
For $i\in\mathbb{N},$
%by the
% activation
%cascading
%process,
%definition of the synchronous reversible cascade,
$P(i)$ implies $P(i+1)$
%for $i\in\mathbb{N}$
by the following arguments:
\begin{itemize}
\item Every $w\in N^{i+1}[S]\setminus N^i[S]$
%trivially satisfies $w\in V\setminus S$
%and
has a neighbor in $N^i[S]$.
So by $P(i),$
$w$ has at least one active neighbor in rounds $i,i+1,\ldots,\infty$.
Hence by Eq.~(\ref{nonseedscanbeactivatedbyjustoneactiveneighbor}),
$w$ is active in rounds $i+1,i+2,\ldots,\infty$.
%So
%for $k\ge i+1,$
%$w$
%will be active
%is activated
%in a round
% $k$
%if it has an active neighbor in
%the previous
%round.
% $k-1$.
% and
%has a neighbor in $N^i[S],$ which is active rounds $$.
%By $P(i),$
%$w$ has
%at least one
%an
%active neighbor
%starting from
%round $i$.
%in rounds $i, i+1,\ldots,\infty$
\item All vertices in $N^i[S]$ are active in rounds $i,i+1,\ldots,\infty$
by $P(i)$.
\end{itemize}
% (Any vertex in $N^{i+1}[S]\setminus N^i[S]$
%belongs to $V\setminus S$ and will be activated by its neighbor
%in $N^i[S]$)
As Eq.~(\ref{thebasicseedsarestable})
%establishes
is precisely
$P(0),$
$P(i)$ holds for all $i\in\mathbb{N}$ by mathematical induction.
Finally, $P(\Delta)$
%and
%the connectedness of $G$
%show
says
that
all vertices
%will be
are
active
% within rounds.
in round $\Delta$.
\comment{%written as above
By construction, every $w\in V\setminus S$ has degree less than
or equal to $1/\rho$ and thus $\lceil\rho\,d(w)\rceil=1$.
%\comment{%maybe separate this to another lemma
%Consequently,
Therefore,
in the first round,
%starting with $S$
each vertex
%the vertices
%in $N(S)\setminus S$
with distance $1$ to a vertex in $S$
% having one neighbor in $S,$
will
% have
be
%activated.
active.
%Suppose that
%, in the $i$-th round,
%each vertex with distance $1$ to a vertex in $S$
%}%maybe separate this to another lemma
}%written as above
\end{proof}
}%the proof needs to be rewritten because $S$ can be empty

\bibliographystyle{plain}
\bibliography{revER}

%\noindent

\end{document}